\definecolor{blank}{rgb}{0.7,0.7,0.7}
\definecolor{gray}{rgb}{0.7,0.7,0.7}
\newenvironment{prog}{\vspace{1.0ex}\par
	\obeylines\@vobeyspaces\tt}{\vspace{1.0ex}\noindent } \makeatother
\newcommand{\startprog}{\begin{prog}}
	\newcommand{\stopprog}{\end{prog}\noindent}
\long\def\comment#1{}
\def \tuple#1{\langle #1 \rangle}
\def\defemb#1#2{\expandafter\def\csname #1\endcsname
	{\relax\ifmmode #2\else\hbox{$#2$}\fi}}
\DeclareSymbolFont{symbolsC}{U}{pxsyc}{m}{n}
\def\re@DeclareMathSymbol#1#2#3#4{%
	\let#1=\undefined
	\DeclareMathSymbol{#1}{#2}{#3}{#4}}
\re@DeclareMathSymbol{\dashleftrightarrow}{\mathrel}{symbolsC}{101}
\newcommand{\bslice}{\mathsf{b\_slice}}
\newcommand{\fslice}{\mathsf{f\_slice}}
\newcommand{\filter}{\mathsf{filter}}
\newcommand{\fire}[1]{\stackrel{#1}{\longrightarrow}}
\newcommand{\Nat}{{\mathbb N}}
\newif\ifpaperVersion
\newcommand{\ignore}[1]{}
\newcommand{\deleted}[1]{}
\newcommand{\added}[1]{}
\newcommand{\pending}[1]{}
\newcommand{\done}[1]{}
\newcommand{\doubt}[1]{}
\newcommand{\josep}[1]{}
\newcommand{\tama}[1]{}
\newcommand{\marisa}[1]{}
\newcommand{\javi}[1]{}
\definecolor{ignoreColor}{rgb}{1,0.5,0}
\definecolor{pendingColor}{rgb}{0.9,0.0,0.2}
\definecolor{doneColor}{rgb}{0.7,0.2,0.7}
\definecolor{doubtColor}{rgb}{0.6,0.6,0.4}
\definecolor{marisaColor}{rgb}{0.2,0.6,0.6}
\definecolor{javiColor}{rgb}{0.6,0.2,0.6}
\definecolor{josepColor}{rgb}{1.0,0.5,0.0}
\definecolor{tamaColor}{rgb}{0.3,0.5,1.0}
\newcommand{\ignore}[1]{\textcolor{ignoreColor}{\#Ignored: #1}}
\newcommand{\deleted}[1]{\textcolor{red}{\#Deleted: #1}}
\newcommand{\added}[1]{\textcolor{blue}{\#Added: #1}}
\newcommand{\pending}[1]{\textcolor{pendingColor}{\textbf{\#Pending: #1}}}
\newcommand{\done}[1]{\textcolor{doneColor}{\#Done: #1}}
\newcommand{\doubt}[1]{\textcolor{doubtColor}{\#Doubt: #1}}
\newcommand{\marisa}[1]{\todo[color=marisaColor!20,bordercolor=marisaColor,linecolor=marisaColor,size=\scriptsize]{\#MMM: #1}}
\newcommand{\javi}[1]{\todo[color=javiColor!20,bordercolor=javiColor,linecolor=javiColor,size=\scriptsize]{\#JAV: #1}}
\newcommand{\josep}[1]{\todo[color=josepColor!20,bordercolor=josepColor,linecolor=josepColor,size=\scriptsize]{\#JJJ: #1}}
\newcommand{\tama}[1]{\todo[color=tamaColor!20,bordercolor=tamaColor,linecolor=tamaColor,size=\scriptsize]{\#TTT: #1}}
\begin{document}
	

\setcounter{page}{239}
\publyear{22}
\papernumber{2148}
\volume{188}
\issue{4}

\finalVersionForARXIV

		\title{Maximal and Minimal Dynamic Petri Net Slicing}

	\author{M. Llorens\thanks{This work has been partially supported by the EU (FEDER) and the Spanish
			MCI/AEI under grant PID2019-104735RB-C41 and by the European Union's Horizon 2020 research and
			innovation programme under grant agreement No 952215 (Tailor).},
           ~J. Oliver\thanksas{1},  ~J. Silva\thanksas{1}\thanks{Address for  correspondence:  VRAIN, Departamento
            de Sistemas Inform\'aticos y Computaci\'on. Universitat Polit\`ecnica de Val\`encia. Valencia, Spain. \newline \newline
                    \vspace*{-6mm}{\scriptsize{Received November 2022; \ accepted  April  2023.}}}, \ and  ~S. Tamarit\thanksas{1}
         \\
  VRAIN, Departamento de Sistemas Inform\'aticos y Computaci\'on\\
  Universitat Polit\`ecnica de Val\`encia\\
  Valencia, Spain\\
  \{mllorens,~fjoliver,~jsilva\}@dsic.upv.es
    }

	\maketitle
	
	\runninghead{M. Llorens et al.}{Maximal and Minimal Dynamic Petri Net Slicing}
	
     \vspace*{-2mm}
	\begin{abstract}~

		{\bf Context:}
		Petri net slicing is a technique to reduce the size of a Petri net to ease the analysis or understanding of the original Petri net.

		{\bf Objective:}
		Presenting two new Petri net slicing algorithms to isolate those places and transitions of a Petri net (the slice) that may contribute tokens to one or more places given (the slicing criterion).

		{\bf Method:} The two algorithms proposed are formalized. The maximality of the first algorithm and the minimality of the second algorithm are formally proven. Both algorithms together with three other state-of-the-art algorithms have been implemented and integrated into a single tool so that we have been able to carry out a fair empirical evaluation.

		{\bf Results:}
		Besides the two new Petri net slicing algorithms, a public, free, and open-source implementation of five algorithms is reported. The results of an empirical evaluation of the new algorithms and the slices they produce are also presented.

		{\bf Conclusions:}
		The first algorithm collects all places and transitions that may contribute tokens (in \emph{any} computation) to the slicing criterion, while the second algorithm collects the places and transitions needed to fire the shortest transition sequence that contributes tokens to \emph{some} place in the slicing criterion. Therefore, the net computed by the first algorithm can reproduce any computation that contributes tokens to any place of interest. In contrast, the second algorithm loses this possibility, but it often produces a much more reduced subnet (which still can reproduce some computations that contribute tokens to some places of interest). The first algorithm is proven maximal, and the second one is proven minimal.
		
	\end{abstract}
	\begin{keywords}Petri nets, Program slicing, Petri net slicing\end{keywords}

	\section{Introduction}\label{intro}
	
	\emph{Program slicing} \cite{Tip95,Sil12} is a technique to extract from a given program all the statements that are influenced by (forward slicing) or that do influence (backward slicing) a specified point of interest. This point is referred to as \emph{slicing criterion}. A slicing criterion is often a pair composed of a program line and a variable of interest. The slice associated with this slicing criterion is the set of statements that influence or are influenced by the specified variable.
	
	\begin{example}
		Consider the program on the left. The code on the right is the backward slice of this program w.r.t. the slicing criterion {\tt <4,n>}.
		
		\smallskip
		{\tt
			\small
			\hspace{0.5cm} (1) read(n);                      \hspace{1.95cm} (1) read(n);\\
			\indent \hspace{0.5cm} (2) read(m);\\
			\indent \hspace{0.5cm} (3) if (n>10)          \hspace{1.8cm} (3) if (n>10)\\
			\indent \hspace{0.5cm} (4) then write(n);   \hspace{0.9cm} (4) then write(n);\\
			\indent \hspace{0.5cm} (5) else write(n+m);
		}
		\smallskip
		
		\noindent This slice contains the statements that can influence the values computed for variable {\tt n} at line 4.
	\end{example}
	
	Program slicing 	was adapted to Petri nets 
	for the first time in \cite{ChW87}.
	\emph{Petri net slicing} \cite{Khan13Survey,Khan18Survey} allows us to extract from a Petri net all places and transitions that are related to a specified \emph{slicing criterion}. One common approach is to define the notion of slicing criterion as a set of places in a given marked Petri net. Then, the slice is computed by extracting all those places and transitions that are associated with the slicing criterion in some way. Different slicing algorithms do different jobs, i.e. preserve different properties and hence reduce a net in different ways. For instance, a slice could be defined as the subnet that can contribute tokens to the slicing criterion. All slicing algorithms can be classified as static or dynamic. Dynamic slicing algorithms take into account the initial marking of the net to produce the slice. In contrast, static slicing algorithms ignore the initial marking. 
	
	\begin{example}
		\label{ex_motiv}
		Consider the following Petri net and the slicing criterion $\{p_4\}$ (the place coloured in grey).
		
		\begin{center}
			\includegraphics[scale=0.45]{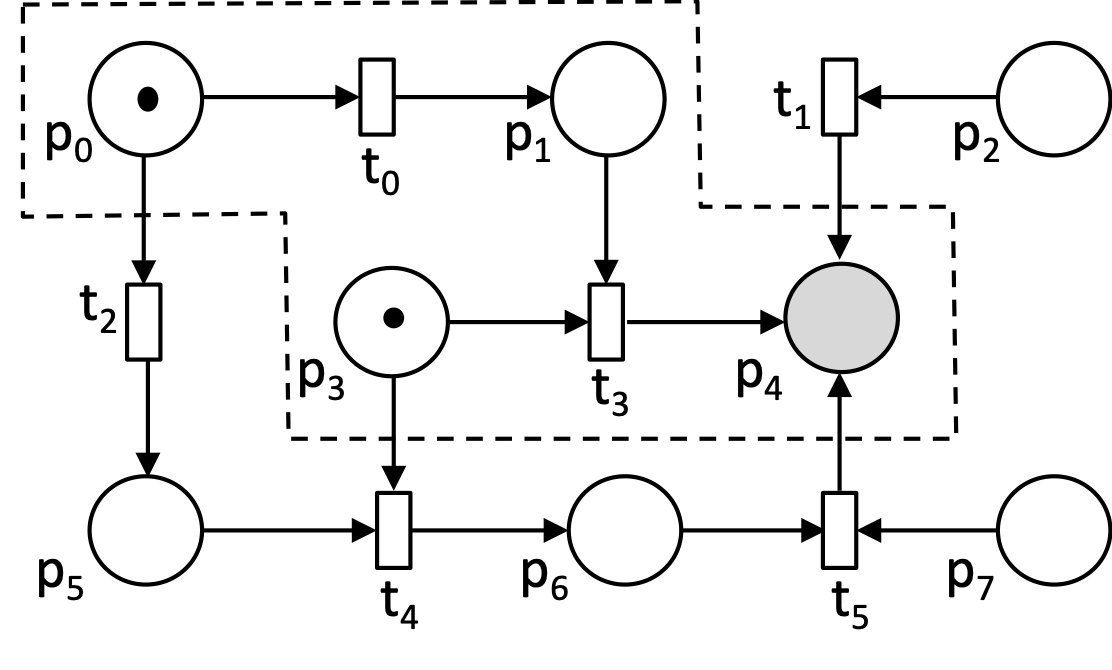}
		\end{center}
		
		\noindent The dynamic backward slice computed is the subnet inside the dashed area. This slice contains the only places and transitions that can contribute tokens to the slicing criterion from the initial marking. Note that ${\tt t_1}$, ${\tt t_2}$, ${\tt t_4}$, and ${\tt t_5}$ cannot contribute tokens in the current state (i.e. with this marking).
	\end{example}
	
	One of the main uses of Petri net slicing is in model checking, as a preprocessing stage to produce a reduced Petri net \cite{Khan13Survey,Khan18Survey,Khan13,LCKK00,Rak12,Yu15}. Reducing the state space explosion problem increases the scalability of the analyses. However, not all slicing algorithms reduce the space explosion problem when they reduce the Petri net, they can even increase it. This was experimentally shown with various static slicers in \cite{Dav20} (see the discussion in Sect. V), so slicing must be used with a clear objective and a careful design.
	
	Among other uses of Petri net slicing we have debugging \cite{Llo08,Yu15}, comprehension of Petri nets \cite{Llo08,Dav20b}, modularization \cite{Dav20b}, and enhancing of reachability analyses \cite{Khan13,LCKK00,Rak12}. Another approach closely related is
	activity-oriented Petri Nets, a methodology to minimize the size of a Petri net. It has been used to reduce the size of Petri net models involving many resources \cite{WanY15,IndT18,YanLAA18} and tested for its applicability for model checking in \cite{Dav20} together with other slicing algorithms (\cite{Rak08,Rak12,DavR18}).
	
	Over the years, several different definitions of Petri net slicing have emerged (including static and dynamic slicing of Petri nets) and alternative approaches to their computation \cite{Rak12,Yu15,Llo08}. 
	For instance, in Example~\ref{ex_motiv}, the transitions ${\tt t_2}$ and ${\tt t_4}$ cannot contribute tokens to the slicing criterion, but they could prevent the tokens in ${\tt p_0}$ and ${\tt p_3}$, respectively, to reach the slicing criterion. Therefore, they are included in some notions of Petri net slicing. The most extended algorithms were reviewed in \cite{Khan13Survey,Khan18Survey}, showing that sometimes they are complementary, and for some applications, different algorithms can be used.
	In this work, we present two Petri net slicing algorithms that complement the state of the art.

	\subsection{Motivation}
	
	The algorithms proposed in this work are useful for Petri net simplification to enhance verification and analysis.
	In particular, we provide a notion of minimal slice which is not achieved by any current algorithm.
	However, the main motivation of this work is not model checking.
	In contrast, we want to provide a new notion of Petri net slicing that is especially useful for debugging and specialization. Thus, they can be used during Petri net construction.
	
	Our first algorithm improves the behaviour and the efficiency of the algorithm by Llorens et al. \cite{Llo08}. It can be used in debugging: when we reach a particular state and we detect a place with a token that should not be there, or just with more tokens than it should have, then we can produce a slice that only contains the part of the Petri net that contributed tokens to that place. Therefore, the bug must be inside the slice.
	For instance, in Example~\ref{ex_motiv}, the part of the net that is responsible for a (possibly wrong) token in $p_4$ is the slice.
	
	Our second algorithm produces minimal slices, as defined in Definition \ref{minimalslice-def}, and it often produces smaller slices than all the other algorithms. It can be particularly useful for program specialization: when we want to extract a component that fires a specific transition from a given state, then we can produce a slice that contains the subnet that fires the desired transition from the given state with the minimum set of transition firings.
	For instance, in Example~\ref{ex_motiv}, the slice is a subcomponent that can be reused in another net, or, e.g., used to understand one specific part of the net.
	
	These two slices are different from those computed by other current algorithms. This is illustrated in Example~\ref{example-motiv}.
	
	\begin{example}\label{example-motiv}
		Figure~\ref{fig:initialPNandRakowCTL} shows a Petri net where the set of places coloured in grey ($\{p_6,p_9\}$) is the slicing criterion.
		We computed a slice of this Petri net with five different algorithms producing the nets in Figures \ref{fig:initialPNandRakowCTL}, \ref{fig:rakowSafety}, \ref{fig:llorensetal}, \ref{fig:llorensetalimproved} and \ref{fig:yuetal}.
		The Petri net in this example and its slices is an interesting contribution because all slices computed are pairwise different (the slices often coincide). 
	\end{example}
	
	\begin{figure}[h!]
		\centering
		\subfigure[\label{fig:initialPNandRakowCTL}Example PN and CTL$^{*}_{-x} slice$ \cite{Rak12}]{\includegraphics[scale=0.45]{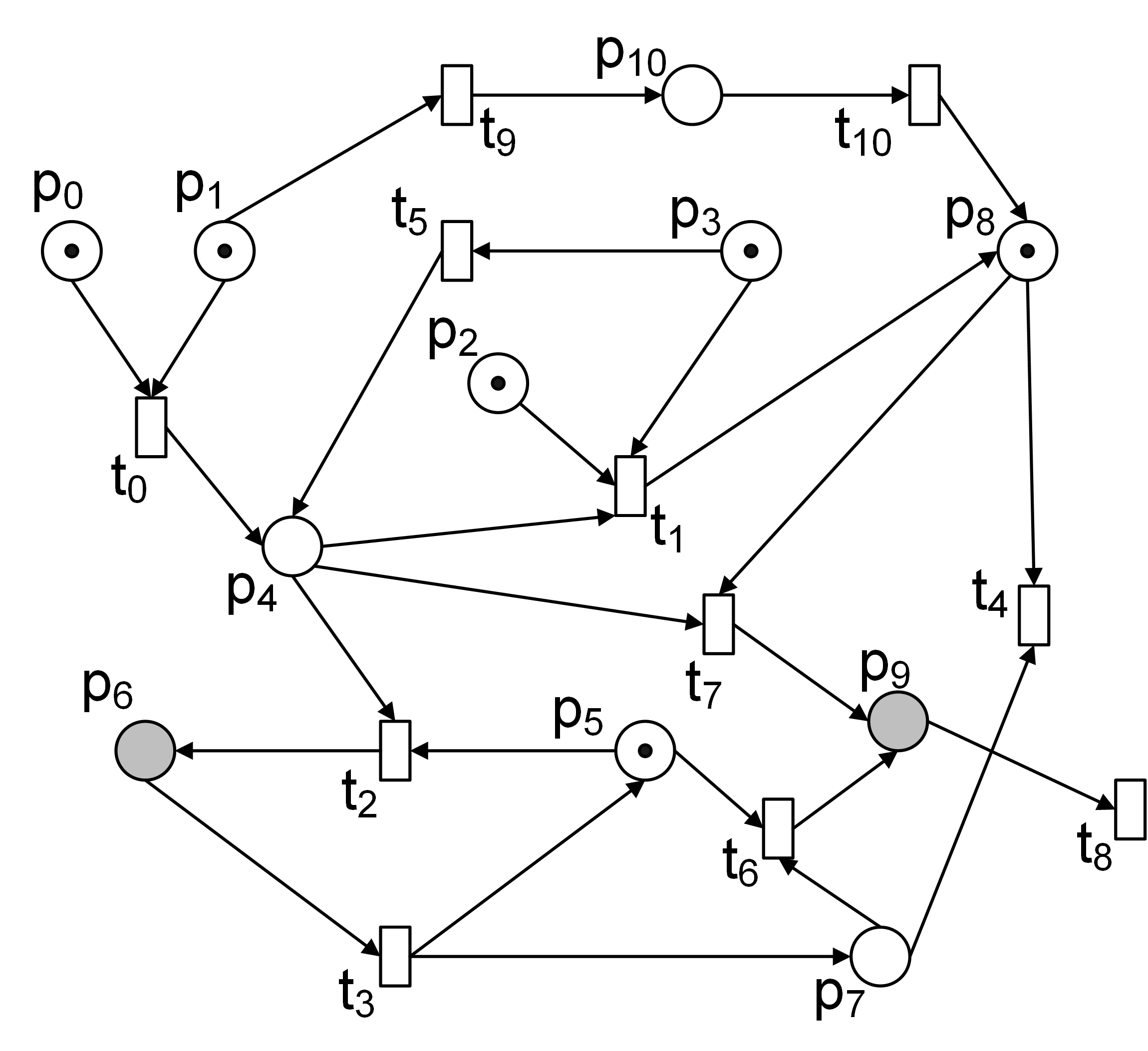}~~~}
		\quad
		\subfigure[\label{fig:rakowSafety}Safety slice \cite{Rak12}]{\includegraphics[scale=0.45]{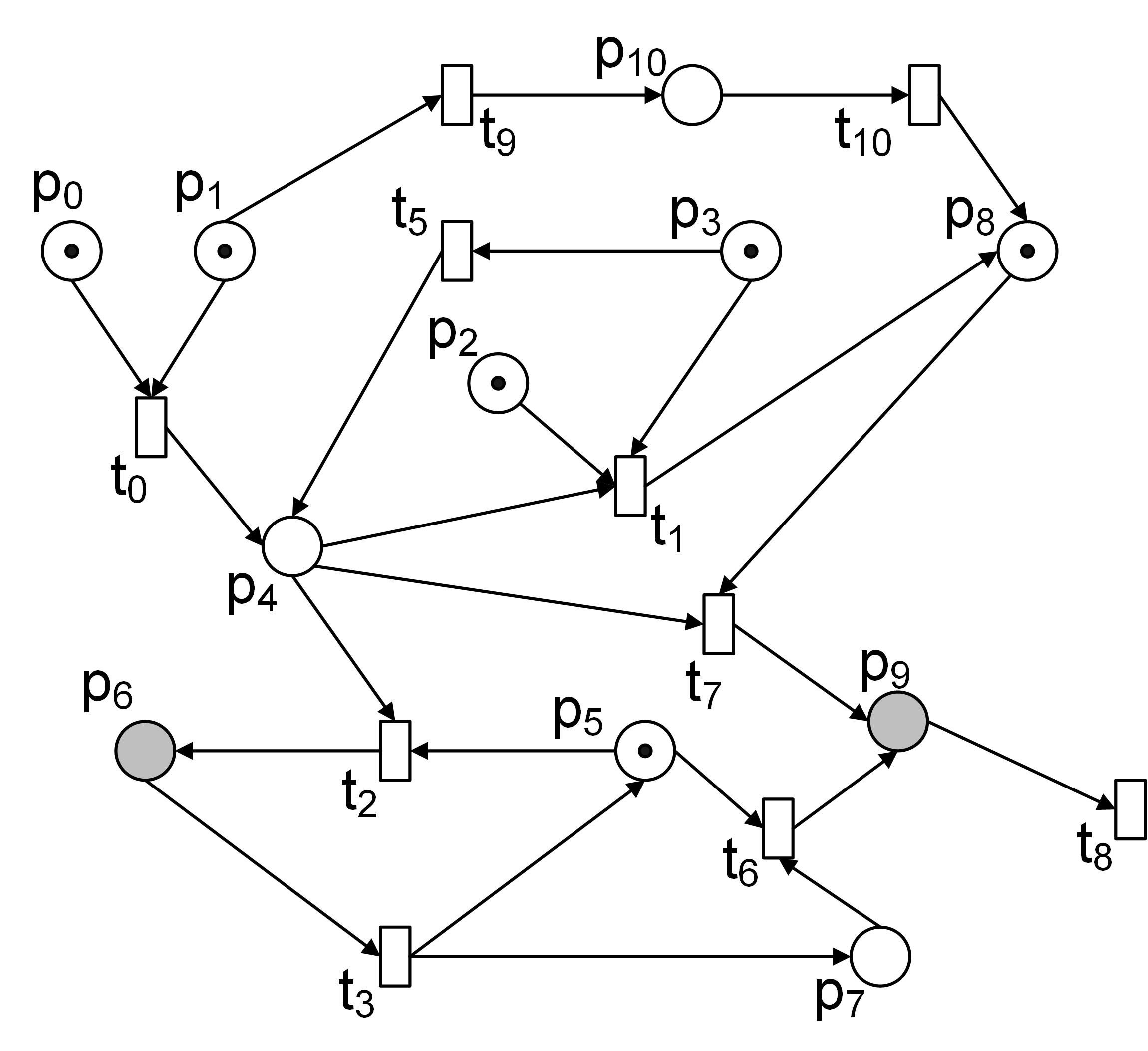}}
		\\
		\subfigure[\label{fig:llorensetal}Slice by our Algorithm 1]{\includegraphics[scale=0.45]{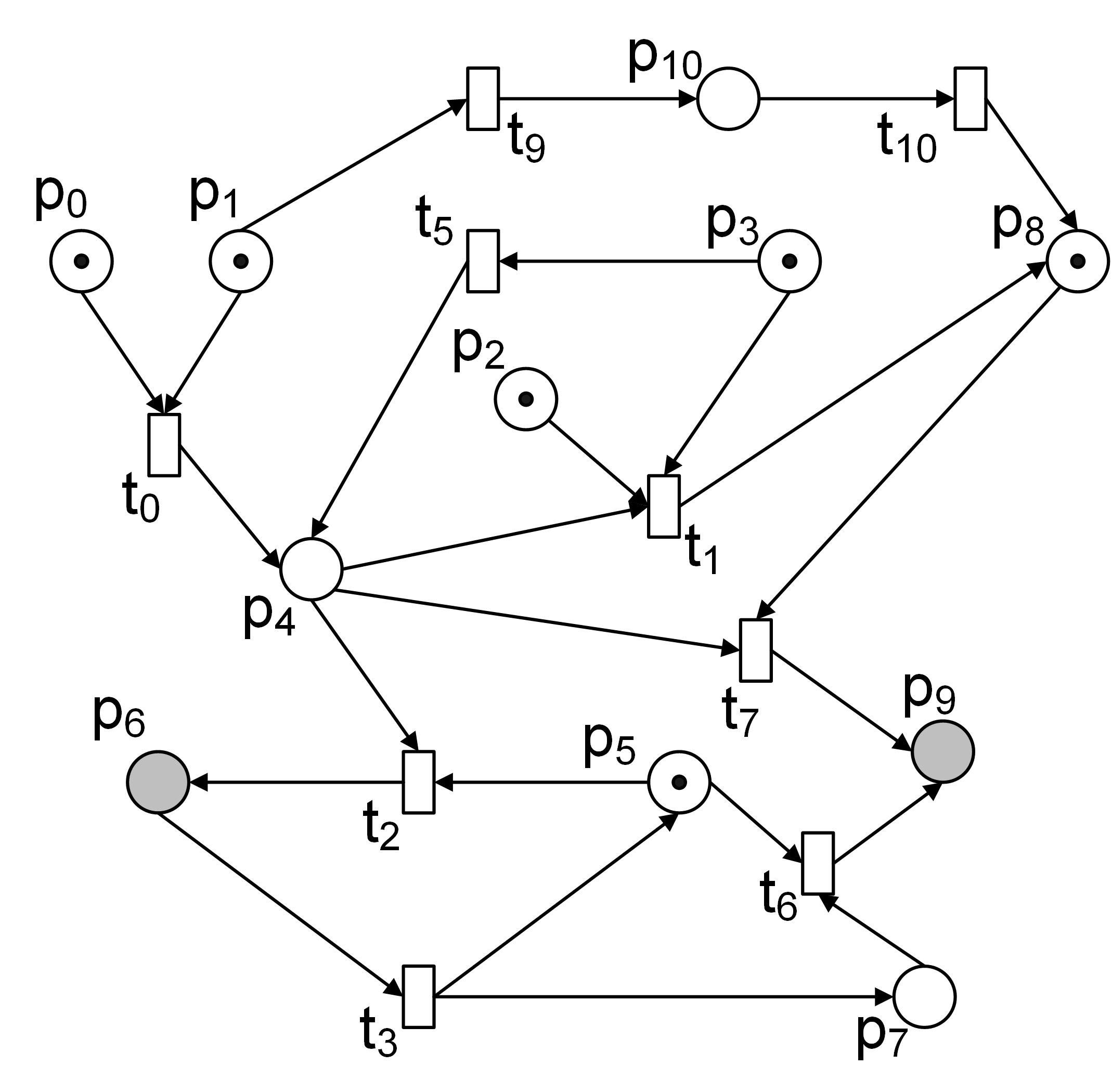}}\quad 
		\subfigure[\label{fig:llorensetalimproved}Slice by our Algorithm 2]{\includegraphics[scale=0.45]{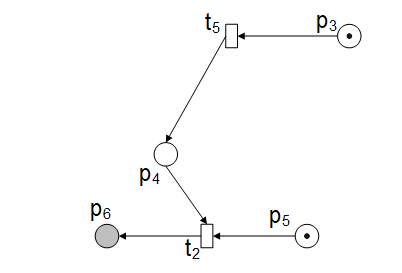}}\quad
		\subfigure[\label{fig:yuetal}Slice by Yu et al. \cite{Yu15}]{\includegraphics[scale=0.45]{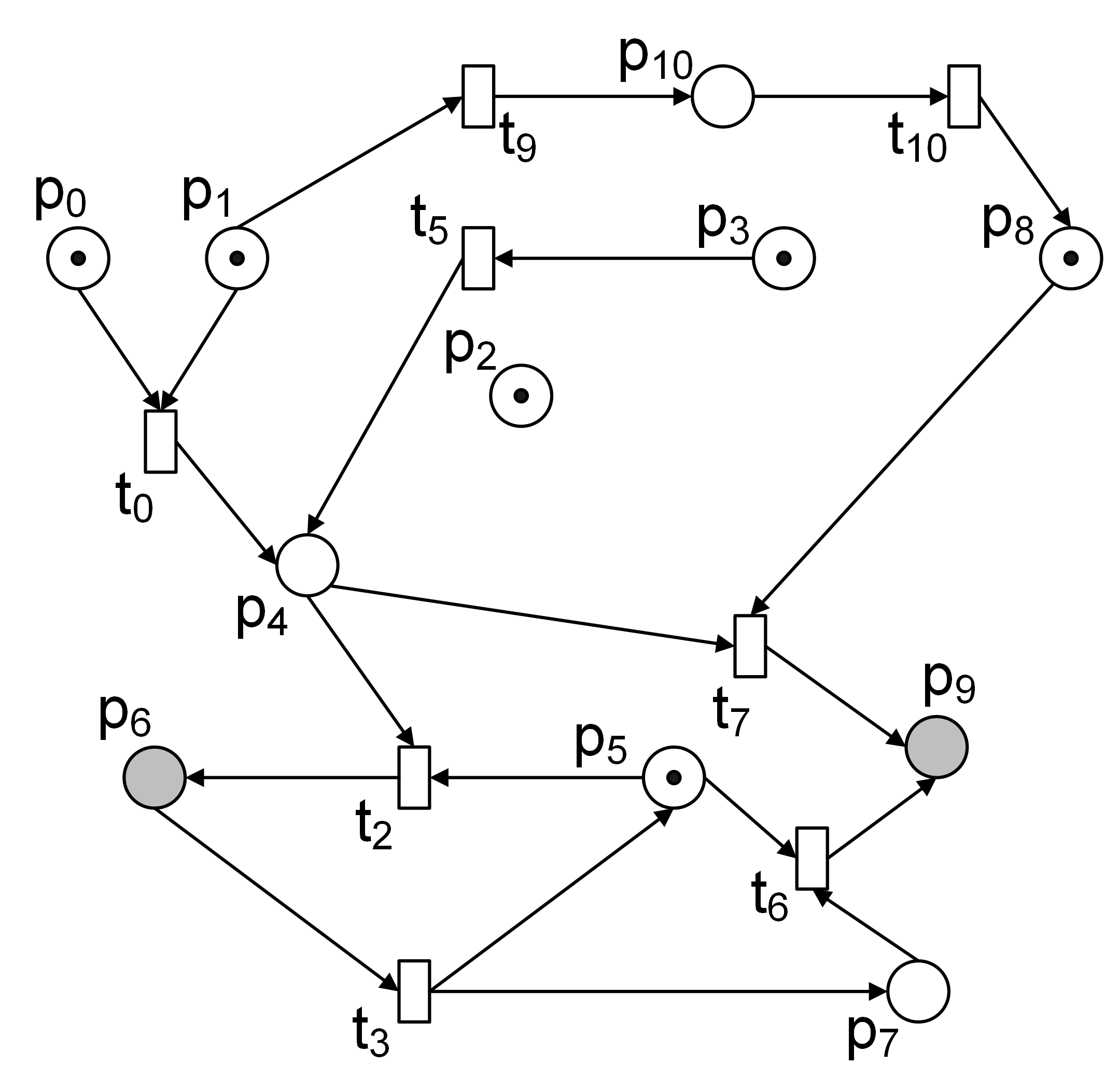}}
		\caption{Slices (a), (b), (c), (d), and (e) extracted from a Petri net (a).}\label{fig:slicingResults}
	\end{figure}
	
	The main motivation of this work is, on the one hand, to formalize the two new Petri net slicing algorithms and prove their maximality and minimality; and, on the other hand, to empirically evaluate them so that we can measure the size of the computed slices and their performance compared to other state-of-the-art algorithms. 
	
	\subsection{Contributions}
	
	The main contributions of this work are the following:
	\begin{enumerate}[(i)]
		\item Two Petri net slicing algorithms.
		They are properly formalized so that we prove some of their properties (maximality and minimality).
		\item A public, free, and open-source implementation of the algorithms.
		\item An empirical evaluation of the new algorithms and the slices that they produce.
	\end{enumerate}

	\subsection{Structure of the paper}

	First, Section~\ref{sec_Prelims} introduces some preliminary definitions needed to establish a theoretical basis to formalize the new algorithms.
	Next, Section~\ref{sec_Algorithms} presents the algorithms and formally proves some of their properties.
	Section~\ref{sec_stateArt} presents the related work, describing other related approaches for Petri net slicing.
	Section~\ref{PN-Slicer} presents the implementation, a tool called \texttt{pn$\_$slicer}.
	In Section~\ref{Sec-Evaluation} we empirically evaluate the algorithms implemented.
	Finally, Section~\ref{Sec-Conclusions} concludes.
	
	\section{Petri net slicing}\label{sec_Prelims}
	\subsection{Petri nets and subnets}
	
	We start with the definition of a Petri net \cite{Murata89,Peterson81,Reisig13}. A \emph{Petri net} may be viewed as a directed bipartite graph with an initial state usually known as \emph{initial marking}. The graph is composed of \emph{places} (represented by circles) and \emph{transitions} (represented by rectangles), which are connected by means of directed \emph{arcs} labelled with a positive integer that represents their \emph{weight}. The arcs can only connect transitions to places and vice-versa. Places are assigned a non-negative integer known as \emph{marking}. This marking is the number of \emph{tokens} contained in the places and is graphically represented with small black circles in the places. The overall marking of the graph (for all places) is a \emph{state} of the system. Formally,
	
	\begin{definition} \label{pn}
		A \emph{Petri net} \cite{Murata89,Peterson81,Reisig13} is a tuple
		$\cN=(P,T,F)$, where:
		\begin{itemize}
			\item $P$ is a set of \emph{places}.
			\item $T$ is a set of \emph{transitions}, such that $P \cap
			T=\emptyset$ $\wedge$ $P \cup T\neq\emptyset$.
			\item $F$ is the \emph{flow relation} 
			that assigns weights to arcs:\\ $F \subseteq (P \times T) \ \cup \ (T \times P)
			\rightarrow \Nat\backslash\{0\}$.
		\end{itemize}
	\end{definition}
	
	A Petri net is said to be \emph{ordinary} if all of its arc weights are 1's.
	
	\begin{definition} \label{markedPN}
		
		A \emph{marking} $M: P \rightarrow \Nat$ of a Petri net is defined over the set of
		places $P$. For each place $p \in P$ we let $M(p)$ denote the number
		of tokens contained in $p$.
		
		A \emph{marked Petri net} $\Sigma$ is a pair $(\cN,M)$ where $\cN$ is
		a Petri net and $M$ is a marking. We represent the \emph{initial
			marking} of the net by $M_0$.
	\end{definition}

	In the following, given a marking $M$ and a set of places $Q$, we
	denote by $M|_Q$ the \emph{restriction} of $M$ over $Q$, i.e., $M|_Q(p) =
	M(p)$ for all $p\in Q$ and $M|_Q$ is undefined otherwise.

	Given a Petri net $\cN=(P,T,F)$, we say that a place $p\in P$ is an
	\emph{input (resp.\ output) place} of a transition $t\in T$ iff
	there is an \emph{input (resp.\ output) arc} from $p$ to $t$ (resp.\
	from $t$ to $p$). Given a transition $t\in T$, we denote by
	$^\bullet t$ and $t^\bullet$ the sets of all input and output places
	of $t$, respectively. Analogously, given a place $p\in P$, we denote
	$^\bullet p$ and $p^\bullet$ the sets of all input and output
	transitions of $p$, respectively.
	
	\begin{definition}\label{pn_t}
		Let $\Sigma = (\cN,M)$ be a marked Petri net, with $\cN=(P,T,F)$.
		We say that a transition $t\in T$ is \emph{enabled} in $M$, in symbols $M\fire{t}$, iff for
		each input place $p\in P$ of $t$, we have $M(p) \geq F(p,t)$.
		
		A transition may only be fired if it is enabled. The \emph{firing} of an enabled transition $t$ in a marking $M$
		eliminates $F(p,t)$ tokens from each input place $p \in {}^\bullet
		t$ and adds $F(t,p')$ tokens to each output place $p' \in
		t^\bullet$, producing a new marking $M'$, in symbols $M \fire{t}
		M'$.
	\end{definition}

	We say that a marking $M_{n}$ is \emph{reachable} from an initial
	marking $M_{0}$ if there exists a \emph{firing sequence} $\sigma = t_1
	t_2 \ldots t_n$ such that $M_{0} \fire{t_1} M_1 \fire{t_2} \ldots
	\fire{t_n} M_{n}$. 
	In this case, we say that $M_n$ is reachable from $M_0$ through
	$\sigma$, in symbols $M_{0} \fire{\sigma} M_{n}$. This notion includes
	the empty sequence $\epsilon$; we have $M \fire{\epsilon} M$ for any
	marking $M$.  We say that a firing sequence is \emph{initial} if it
	is enabled at the initial marking.	
	
	We say that $\sigma'$ is a \emph{subsequence} of a firing
	sequence $\sigma$ w.r.t.\ a set of transitions $T$ if $\sigma'$
	contains all transition firings in $\sigma$ that correspond to transitions in $T$ and in the
	same order.

    \begin{definition}\label{covers}
     Let $M$ and $M'$ be markings of a Petri net $\cN=(P,T,F)$. $M'$ \emph{covers} $M$ if $M(p) \leq M'(p) \forall p \in P$, in symbols $M \leq M'$.
     If furthermore $M \neq M'$, we say that $M < M'$, and $M'$ \emph{strictly covers} $M$. If neither marking covers the other, they are \emph{incomparable}.
    \end{definition}

    \begin{lemma}\label{monotonicity-lemma}
    Petri nets are \emph{strictly monotonic}. Let $M$ and $M_1$ be markings of a Petri net such that $M < M_1$, and a firing sequence $\sigma$ such that $M \fire{\sigma} M'$. Then, $M_1 \fire{\sigma} M_1'$ and $M' < M_1'$.
    \end{lemma}

	The set of reachable markings or \emph{reachability set} is the set of all possible markings that are reachable from an initial
	marking $M_{0}$ in a marked Petri net $\Sigma = (\cN,M_{0})$, denoted by $R(\cN,M_{0})$ (or simply by $R(M_{0})$ when $\cN$ is clear
	from the context).

    To list all the markings in $R(\cN,M_{0})$ we can construct a \emph{reachability tree}. The construction of the reachability tree consists of taking $M_{0}$ as the root of the tree and firing all the enabled transitions in $M_{0}$. This leads to new markings that enable other transitions. Taking each of those new markings as a new root, all reachable markings can be recursively generated.
    \begin{definition}\label{def:reachabilitytree}
    A \emph{reachability tree} $\cT = (V, E)$ of a marked Petri net $\Sigma = (\cN,M_0)$, with $\cN=(P,T,F)$, is an edge-labelled directed rooted tree, where nodes $V$ are markings $\in R(M_0)$, the root is $M_0$ and edges $E \subseteq (V \times T \times V)$ such that $(M, t, M') \in E$ if $M\fire{t}M'$, with $M, M' \in V$ and $t \in T$.

    A \emph{path} between two nodes $v, v' \in V$ of $\cT$, in symbols $RTpath(v, v')$, is a sequence of distinct edges, interleaved with nodes, that \emph{lead from} $v$ \emph{to} $v'$.
    Given a node $v \in V$, the nodes with a path to $v$ are the \emph{ancestors} of $v$, and the nodes with a path from $v$ are the \emph{successors} of $v$.
    \end{definition}

	We use the following notion of \emph{subnet} to define Petri net slicing
	(roughly speaking, we identify a slice with a subnet).
	First, given $(P' \times T') \ \cup \ (T' \times P') \subseteq (P
	\times T) \ \cup \ (T \times P)$, we say that a flow relation $F':(P'
	\times T') \ \cup \ (T' \times P') \rightarrow \Nat$ is a \emph{restriction} of
	another flow relation $F:(P \times T) \ \cup \ (T \times P) \rightarrow
	\Nat$ over $P'$ and $T'$, in symbols $F|_{(P',T')}$, if $F'$ is
	defined as follows: $F'(x,y) = F(x,y)$ if $(x,y)\in (P' \times T') \
	\cup \ (T' \times P')$. 
	
	\begin{definition}\label{subnet}\cite{DesEsp95}
		A \emph{subnet} $\cN'=(P',T',F')$ of a Petri net $\cN=(P,T,F)$, denoted $\cN' \subseteq \cN$, is a
		Petri net such that $P'\subseteq P$, $T'\subseteq T$ and $F'$ is a
		restriction of $F$ over $P'$ and $ T'$, i.e., $F' = F|_{(P',T')}$.
	\end{definition}

	\subsection{Slicing Petri nets}
	
	In this section, we formalize our notion of Petri net slicing, giving a formal definition for slicing criterion and slice in the context of Petri nets. Roughly, a slicing criterion is composed of a set of places and an initial marking. With a slicing criterion we can compute slices of a Petri net, which are a subnet that preserves at least one firing sequence from the initial marking that contributes tokens to the places of the slicing criterion. In the following, we formalize these ideas and provide a definition for minimal and maximal slices.

	Besides static or dynamic, slicing algorithms are classified as forwards/backwards:
	a forward slice is formed from those places and transitions that can be influenced by the slicing criterion.
	The opposite is a backward slice: those places and transitions that can influence the slicing criterion.
	Our algorithms produce backward slices.
	
	Before formally defining the new algorithms for Petri net slicing, we first provide a formal definition of \emph{slicing criterion} and \emph{slice} of a Petri net.
	
	\begin{definition} \label{slicing-criterion}
		Let $\cN=(P,T,F)$ be a Petri net. A \emph{slicing criterion} for $\cN$ is
		a pair $\tuple{M_0,Q}$ where $M_0$ is an initial marking for $\cN$
		and $Q\subseteq P$ is a set of places.
	\end{definition}
	
	For the definition of a slice, we need to introduce the notion of \emph{increasing firing sequence}, which, roughly, is the firing of $\sigma$, a sequence of transitions that eventually increases the number of tokens in some place $p$ of the slicing criterion (i.e. $F(t_n,p) > F(p,t_n)$ where $t_n$ is the last transition of $\sigma$).

	\begin{definition} \label{increasing-sequence-def} Let $\cN = (P,T,F)$ be a Petri
		net and let $\tuple{M_0,Q}$ be a slicing criterion for $\cN$.
		An \emph{increasing firing sequence} of $\cN$ w.r.t. $\tuple{M_0,Q}$ is a firing sequence $\sigma = t_1 \ldots t_n$
		with $M_0 \fire{t_1} \ldots \fire{t_{n-1}} M_{n-1}
		\fire{t_n} M_n$ such that $M_{n-1}(p) < M_n(p)$ for some $p\in Q$.
	\end{definition}
	
	Based on the definition of increasing firing sequence, we can provide a notion of slice.
	
	\begin{definition} \label{slice1-def} Let $\cN = (P, T, F)$ be a Petri
		net and let $\tuple{M_0,Q}$ be a slicing criterion for $\cN$. Given
		a Petri net $\cN' = (P',T',F')$, we say that $\cN'$ is a \emph{slice} of
		$\cN$ w.r.t.\ $\tuple{M_0,Q}$ if the following conditions hold:
		\begin{itemize}
			\item the Petri net $\cN'$ is a subnet of $\cN$, such that $\not \exists p \in$ $^\bullet t, t \in T~|~t \in T' \wedge p \notin P'$.
			\item there exists an increasing firing sequence $\sigma'$ in $\cN'$ w.r.t. $\tuple{M_0,Q}$ such that $\sigma'$ is a subsequence of $\sigma$, where $\sigma$ is an increasing firing sequence in $\cN$ w.r.t. $\tuple{M_0,Q}$.
		\end{itemize}
	\end{definition}
	
	This definition of slice forces all transitions in the slice to keep their input places. This avoids that a non-source transition in the original net becomes a source transition in the slice producing a firing sequence that was not possible in the original net. Moreover, this notion of slice is very flexible because it only requires the existence of one increasing firing sequence in the slice.
	We can make the definition more restrictive if we require that all increasing firing sequences of the original net have a counterpart in the slice. We call this kind of slice \emph{maximal} slice.
	
	\begin{definition} \label{maximalslice-def} Let $\cN$ be a Petri
		net and let $\tuple{M_0,Q}$ be a slicing criterion for $\cN$. Given
		a Petri net $\cN'$, we say that $\cN'$ is a \emph{maximal} slice of
		$\cN$ w.r.t.\ $\tuple{M_0,Q}$ if the following conditions hold:
		\begin{itemize}
			\item the Petri net $\cN'$ is a slice of $\cN$ w.r.t.\ $\tuple{M_0,Q}$,
			\item for each increasing firing sequence $\sigma$ in $\cN$ w.r.t. $\tuple{M_0,Q}$, there exists an increasing firing sequence $\sigma'$ in $\cN'$ w.r.t. $\tuple{M_0,Q}$ such that $\sigma'$ is a subsequence of $\sigma$.
		\end{itemize}
	\end{definition}

	Finally, we can also provide a definition of \emph{minimal} slice, which is a slice that can fire the shortest increasing firing sequence.

	\begin{definition} \label{minimalslice-def} Let $\cN$ be a Petri
		net and let $\tuple{M_0,Q}$ be a slicing criterion for $\cN$. Given
		a Petri net $\cN'=(P',T',F')$, we say that $\cN'$ is a \emph{minimal} slice of
		$\cN$ if the following conditions hold:
		\begin{itemize}
			\item the Petri net $\cN'$ is a slice of $\cN$ w.r.t.\ $\tuple{M_0,Q}$,

            \item the Petri net $\cN'$ only contains the places and transitions needed to fire an increasing firing sequence $\sigma' = t'_1 \ldots t'_n$ in $\cN'$ w.r.t. $\tuple{M_0,Q}$; and there does not exist a slice $\cN''=(P'',T'',F'')$ of $\cN$ w.r.t.\ $\tuple{M_0,Q}$ with an increasing firing sequence $\sigma'' = t''_1 \ldots t''_m$ such that $m<n$.
		\end{itemize}
	\end{definition}
	
	It is important to clarify that a maximal slice is not necessarily the biggest slice (considering the size of a Peri net as its number of transitions). The biggest slice (if it exists) is always the original Petri net because, according to Definition~\ref{slice1-def}. A maximal slice, however, is often smaller than the original Petri net.
On the other hand, the minimal slice of a Petri net is not necessarily the smallest one and it is not necessarily unique. There can coexist many different minimal slices of a given Petri net, all of them with a (same size) shortest increasing firing sequence.

\begin{example}\label{example-minimal-smallest}
The following Petri net clearly shows that the minimal slice is not necessarily the smallest one:

\vspace*{-2mm}
\begin{center}
	\includegraphics[scale=0.48]{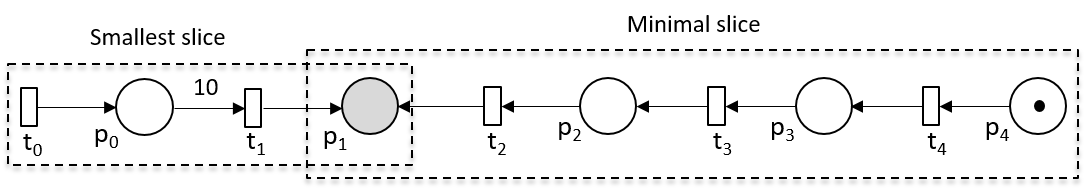}
\end{center}

The definition and algorithm needed to compute the smallest slice is given in Section \ref{sec:minimal-slice2}.
\end{example}

It is worth to note that the above definitions are a refinement and a generalization over previous definitions of Petri net dynamic slice. First, the standard notion of dynamic slice (the one corresponding to the standard definition of dynamic slice in program slicing \cite{KL88}), defined in \cite{Llo08}, forces the slice to include \emph{all} the paths that, from the initial marking, can contribute tokens to \emph{some} place of the slicing criterion. This corresponds to our definition of maximal slice (\ref{maximalslice-def}). Another definition is the one given by Yu et al. \cite{Yu15}, which only includes in the slice \emph{one} single path (which is not necessarily the minimal one). Our definition of minimal slice (\ref{minimalslice-def}) corresponds to the definition of Yu et al. but it forces the increasing firing sequence of the slice to be minimal. Therefore, our definition of dynamic slice (\ref{slice1-def}) generalizes all previous definitions because the slices in \cite{Llo08} and \cite{Yu15} (and also maximal and minimal slices) are particular instances of that definition.

	\section{Two algorithms for Petri net slicing}\label{sec_Algorithms}
	
	In this section, we propose two algorithms to compute, respectively, maximal and minimal slices.
	The Petri net slices computed by the two algorithms have different purposes and exhibit different properties,
	but both of them are \emph{dynamic} because they consider an initial marking in the Petri net.
	
	\begin{description}
		\item Algorithm~\ref{alg:Slicing-PN}: It computes the union of all paths that can contribute tokens to any place in the slicing criterion. Therefore, all firing sequences in the original net that can contribute tokens to the slicing criterion are preserved in the slice.
		\item Algorithm~\ref{alg:Slicing-PN2}: It computes the path that can contribute tokens to some place in the slicing criterion by firing the minimum number of transitions. Therefore, at least one firing sequence in the original net that can contribute tokens to some place in the slicing criterion is preserved in the slice.
	\end{description}
	
	The difference between both algorithms can be seen in Figures~\ref{fig:llorensetal} and \ref{fig:llorensetalimproved}. Figure~\ref{fig:llorensetal} contains all places and transitions in any path that finishes in a place of the slicing criterion. In contrast, Figure~\ref{fig:llorensetalimproved} only contains the subnet that contains the shortest increasing firing sequence.
	
	Clearly, Algorithm~\ref{alg:Slicing-PN2} is more aggressive than Algorithm~\ref{alg:Slicing-PN}, and it has an interesting property related to the size of the slices:
	Algorithm~\ref{alg:Slicing-PN2} computes slices that are always subnets of those slices obtained by using Algorithm~\ref{alg:Slicing-PN}.
	
	Algorithm~\ref{alg:Slicing-PN} has an important property: maximality (stated and proven in Theorem~\ref{theo_completeness}).
	It preserves all places and transitions that can contribute tokens to the slicing criterion. This means that 
	it is especially useful for debugging. If we find a place with a wrong number of tokens, then the maximal slice computed w.r.t. that place necessarily contains the cause of the error. This notion of slice is equivalent to the original notion proposed by Mark Weiser \cite{Wei84} in the context of programming languages.
	
	Algorithm~\ref{alg:Slicing-PN2}, however, is not useful for debugging, because it could slice the cause of the error. Algorithm~\ref{alg:Slicing-PN2} is useful for Petri net comprehension and, particularly, for Petri net specialization. For instance, if given a marking of a Petri net, we want to extract a component (e.g., for reuse) that contributes tokens to a place, then the minimal slice computed by Algorithm~\ref{alg:Slicing-PN2} is exactly that component. The minimality of the slices computed by this algorithm is stated and proven in Theorem~\ref{Theo_Minimality}.
	
	\subsection{Petri net slicing algorithm 1: maximal contributing slice}
	
	\begin{algorithm*}\caption{Dynamic slicing of a marked
			Petri net: Maximal contributing slice}\label{alg:Slicing-PN}
		\algsetup{linenosize=\normalsize}
		\begin{algorithmic}[0]
			
			\REQUIRE A Petri net $\cN =(P,T,F)$ and a slicing criterion $\tuple{M_0,Q}$ for $\cN$\\
			\ENSURE It it exists, the maximal contributing slice $\cN'$ of $\cN$ with respect to $\tuple{M_0,Q}$\\[2ex]
			
			First, we compute a
			\emph{backward slice} $(P_b,T_b,F_b)$.  This is
			obtained from
			\[
			\hspace{3ex}  (P_b,T_b)=\bslice_\cN(Q,\{\:\}) \mbox{ with } F_b = F|_{(P_b,T_b)}
			\]
			Function
			$\bslice_\cN$ is defined as follows:\\[2ex]
			$ \bslice_\cN(W,W_{done}) =$
			$ \left\{ \begin{array}{ll}
			(\{\:\},\{\:\})  \hspace{40ex}\mbox{if}~ W=\{\:\} \\
			(\{p\}~\cup~{}^\bullet T_{in}, T_{in}) \oplus \bslice_\cN((W \cup {}^\bullet T_{in})\:\backslash W'_{done},W'_{done})
			\hspace{2ex} \mbox{if}~W\neq \{\:\}, \\\hspace{32ex}\mbox{where}~T_{in}={}^\bullet p
			,\\\hspace{32ex} W'_{done} = W_{done}\cup\{p\}~\mbox{for some } p \in W,\\ 
			\hspace{32ex}\mbox{and}~  (A,B) \oplus (A',B') = (A \cup A', B \cup B')\\
			\end{array} \right.
			$\\[2ex]
			
			Now, we compute a \emph{forward slice} $\cN_f = (P_f,T_f,F_f)$ from
			\[
			\hspace{3ex} (P_f,T_f) = \fslice_\cN(sc,\{\:\},\{t\in T_b\mid M_0\fire{t} \}) ~~\mbox{with } F_f = F_b|_{(P_f,T_f)}
			\]
			where the slicing criterion $sc$ of the forward slice is defined as: $\{p\in P_b\mid M_0(p)>0\}$,\\
			and function $\fslice_\cN$ is defined as follows:\\[2ex]
			
			$\fslice_\cN(W,R,V) =$
			$\left\{ \begin{array}{ll}
			(W, R)  & \hspace{1ex}\mbox{if}~ V=\{\:\} \\
			\fslice_\cN(W\cup V {}^\bullet,R\cup V,V')
			& \hspace{1ex}\mbox{if}~V\neq \{\:\},\\ & ~\mbox{where}~
			V' = \{ t \in T_b\:\backslash (R\cup V)
			\mid {}^\bullet t \subseteq W\cup V {}^\bullet \}\\ 
			\end{array} \right.
			$\\[2ex]

\IF {$\cN_f = (\emptyset, \emptyset, \emptyset)$}
        \RETURN ``no slice exists''
		\ELSE

		\STATE	The final slice $\cN' = (P',T',F|_{(P',T')})$ is composed of all places and transitions in the computed forward slice for which there exists a path to some place in the slicing criterion, such that:\\[1ex]

            \hspace{3ex}$P' = \{p~|~p \in P_f~\wedge~suc(p) \cap Q \neq \emptyset\}$\\[2ex]

            \hspace{3ex}$T' = \{t~|~t \in T_f~\wedge~suc(t) \cap Q \neq \emptyset\}$\\[1ex]

             where function $suc$ is the standard graph theory successor function.\\[1ex]

              \RETURN $\cN'$

\ENDIF\\[1ex]

		\end{algorithmic}
	\end{algorithm*}

	Algorithm~\ref{alg:Slicing-PN} describes our method to extract a
	dynamic slice from a Petri net.
	Intuitively speaking,
	Algorithm~\ref{alg:Slicing-PN} constructs the slice of a Petri net
	$\cN=(P,T,F)$ for a set of places $Q\subseteq P$ as follows. The key idea
	is to capture all possible token flows relevant for places in $Q$. For
	this purpose,
	\begin{itemize}
		\item we first compute the possible paths that lead to the slicing
		criterion, producing a backward slice;
		\item then from the backward slice, we compute the paths that may be followed by the
		tokens of the initial marking (those tokens that remain in the backward slice).
	\end{itemize}
	This can be done by taking into account that (i) the marking of a
	place $p$ depends on its input and output transitions, (ii) a
	transition may only be fired if it is enabled, and (iii) the
	enabling of a transition depends on the marking of its input places.
	The algorithm is divided into three steps:
	\begin{itemize}
		\item The first step is a backward slicing method (which is similar
		to the \emph{basic slicing algorithm} of \cite{Rak07}) that obtains
		a slice $\cN_b = (P_b,T_b,F_b)$ defined as the subnet of $\cN$ that
		includes all input places of all transitions transitively connected to any place
		$p$ in $P$, starting with $Q \subseteq P$.
		\begin{itemize}
			\item The core of this method is the auxiliary function $\bslice_\cN$, which is initially called with the set of places $Q$ of the slicing criterion together with an empty set of places.
			\item For a particular non-empty set of places $W$ and a particular place $p \in W$, function $\bslice_\cN$ returns the transitions $T$ in ${}^\bullet p$ and the input places of these transitions ${}^\bullet T$. Then, function $\bslice_\cN$ moves backwards (i) adding the place $p$ to the set $W_{done}$, (ii) adding ${}^\bullet T$ to $W$ and (iii) removing from $W$ the updated set $W_{done}$, until the set $W$ becomes empty.
		\end{itemize}
		\item The second step is a forward slicing method that obtains a slice
		$\cN_f = (P_f,T_f,F_f)$ defined as the subnet of $\cN_b$ that includes all places and transitions reachable from
		all transitions in $T_b$ initially enabled in $M_0$ as well as those
		places $p \in P_b$ such that $M_0(p)>0$.
		\begin{itemize}
			\item We define an auxiliary function $\fslice_\cN$, which is initially called with the places in $P_b$ that are marked at $M_0$, an empty set of transitions and the transitions in $T_b$ enabled  in $M_0$.
			\item For a particular set of places $W$, a particular set of transitions $R$ and a particular non-empty set of transitions $V$, function $\fslice_\cN$ moves forwards adding the places in $V^\bullet$ to $W$, adding the transitions in $V$ to $R$ and replacing the set of transitions $V$ by a new set $V'$ in which the transitions are included that are not in $R \cup V$ and whose input places are in $W \cup V^\bullet$.
			\item Finally, when $V$ is empty, function $\fslice_\cN$ returns the accumulated set of places and transitions $W \cup R$.
		\end{itemize}
		\item  The final dynamic slice is composed of all places and transitions in the computed forward slice for which there exists a path to some place in the slicing criterion.
	\end{itemize}

\begin{example}
	The different phases of Algorithm~\ref{alg:Slicing-PN} are depicted in Figure~\ref{fig:sliceAlg1}. The original Petri net only has one place ($p_3$, in grey) in the slicing criterion. This Petri net is sliced in the first phase with function $\bslice_\cN$, producing a backward slice. Then, function $\fslice_\cN$ is used to compute a forward slice. Finally, those parts not connected to the slicing criterion are removed to produce the final slice.
\end{example}

\begin{figure}[h!]
\vspace*{-3mm}
\centering
\includegraphics[scale=0.45]{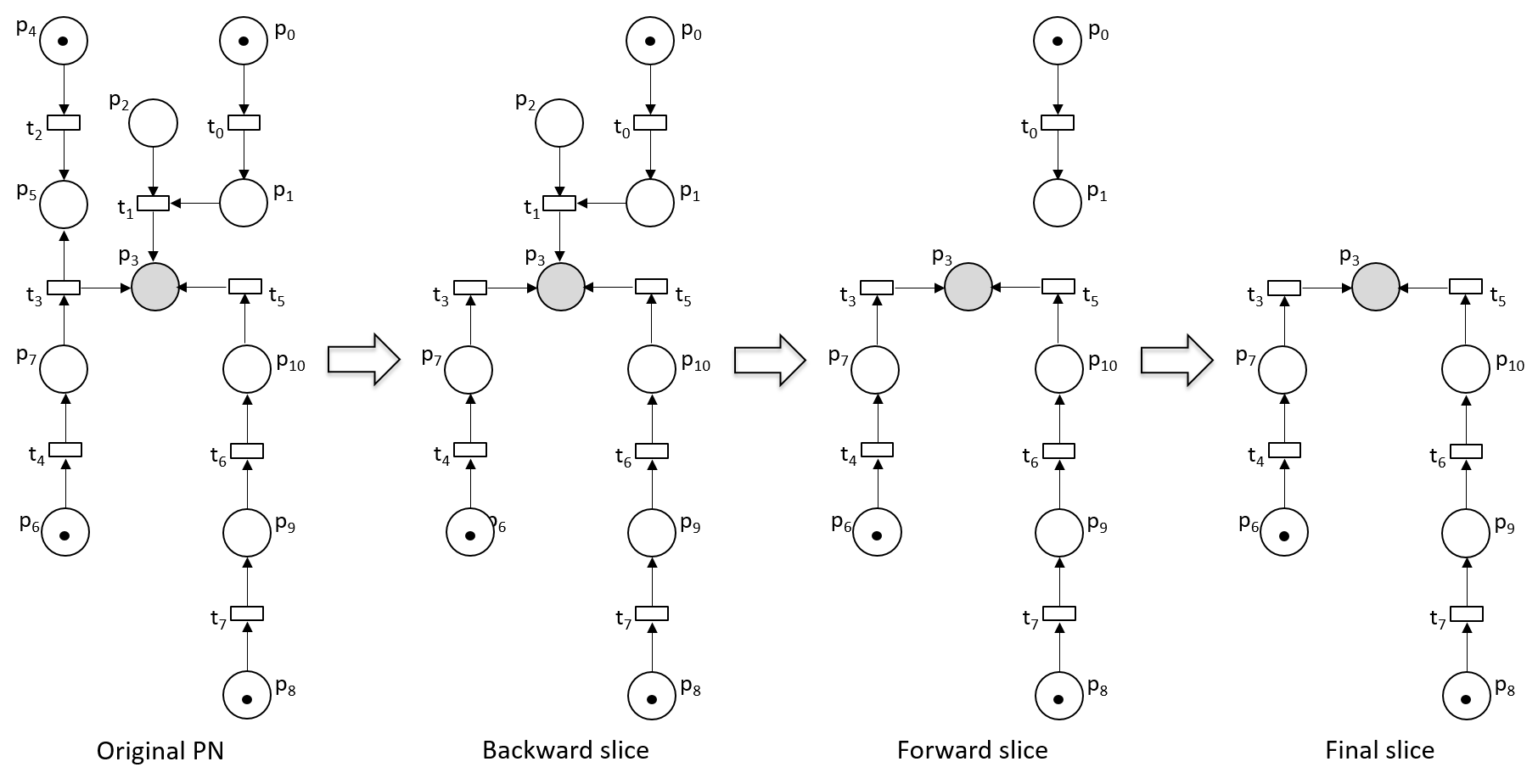}\vspace*{-1mm}
\caption{Phases of Algorithm 1.} \label{fig:sliceAlg1}\vspace*{-2mm}
\end{figure}

	This algorithm improves the formalization and precision of Algorithm~1 in \cite{Llo08}. There are two important differences between this algorithm and Algorithm~1 in \cite{Llo08}: (i) The net where the forward traversal is done. In \cite{Llo08} the forward slice is computed w.r.t. the initial net. In contrast, the improved algorithm computes the forward slice w.r.t. the backward slice produced. The forward slicing phase is, therefore, more efficient in the new algorithm. (ii) The new algorithm implements a third phase after the forward slicing phase. This phase filters out those subnets of the forward slice that cannot contribute tokens to some place of the slicing criterion; hence, improving the precision of the original algorithm.
	
	The cost of both algorithms is bounded by the number of transitions $T$ of the original Petri net: each transition is traversed at most twice in Algorithm~1 in \cite{Llo08} and at most three times in the new algorithm. Therefore, the asymptotic cost is $\cO(2T)$ and $\cO(3T)$, respectively. 
However, the new algorithm is monotonically more efficient because many places and transitions that are not reachable from the slicing criterion should not be processed (and they are processed in \cite{Llo08}); thus, the cost of the new algorithm is, in general, smaller than the algorithm in \cite{Llo08} (see empirical evaluation in Section~\ref{Sec-Evaluation}).
	
	The following result states that the net produced by Algorithm \ref{alg:Slicing-PN} is a slice of the input Petri net.
	
	\begin{theorem}[Soundness] \label{Theorem1}
		Let $\cN$ be a Petri net and $\tuple{M_0,Q}$ be a slicing criterion
		for $\cN$. The dynamic slice $\cN'$ computed in Algorithm~\ref{alg:Slicing-PN}
		is a valid slice according to Definition~\ref{slice1-def} or if the slice does not exist, then the algorithm returns ``no slice exists''.
	\end{theorem}
	
	In order to prove this result, we need first to prove the following lemma:

	\begin{lemma}\label{lemma1}
		Let $\cN$ be a Petri net and $\tuple{M_0,Q}$ be a slicing criterion for $\cN$.
		Let $\sigma$ be an increasing firing sequence of $\cN$ w.r.t. $\tuple{M_0,Q}$.
		Let $\cN_b=(P_b,T_b,F_b)$ be the backward dynamic slice computed by function $\bslice_\cN(Q, \{ \})$ in Algorithm \ref{alg:Slicing-PN}.
		There exists an increasing firing sequence $\sigma'$ in $\cN_b$ w.r.t. $\tuple{M_0,Q}$ such that $\sigma'$ is a subsequence of $\sigma$.
	\end{lemma}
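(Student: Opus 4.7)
The plan is to take $\sigma'$ to be the subsequence of $\sigma$ consisting of exactly those transitions that belong to $T_b$, and then to verify that $\sigma'$ is a well-defined, increasing firing sequence in $\cN_b$ from $M_0$. I would first establish two structural properties of $\bslice_\cN$: (a) for every $t \in T_b$, all input places of $t$ lie in $P_b$, i.e.\ ${}^\bullet t \subseteq P_b$; and (b) for every $p' \in P_b$, every transition that produces tokens into $p'$ is contained in $T_b$, i.e.\ ${}^\bullet p' \subseteq T_b$. Both follow by induction on the recursive unfolding of $\bslice_\cN$: each place drawn from the working set contributes its entire preset to $T_b$, and the input places of each such transition are added to $P_b$ in turn. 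Applying (b) to the place $p \in Q \subseteq P_b$ whose token count is strictly increased by $t_n$ shows that $t_n \in T_b$, so $t_n$ is also the last transition of $\sigma'$.

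Next, I would prove the following invariant by induction on the length of prefixes of $\sigma$: if $M_0 \fire{t_1 \ldots t_i} M_i$ in $\cN$ and $\sigma'_i$ denotes the subsequence of $t_1 \ldots t_i$ restricted to $T_b$, then $\sigma'_i$ is a firing sequence in $\cN_b$ from $M_0$ reaching a marking $M'_i$ such that $M'_i(q) \geq M_i(q)$ for every $q \in P_b$. In the inductive step, if $t_{i+1} \notin T_b$ then by~(b) the firing of $t_{i+1}$ cannot add tokens to any place in $P_b$, so skipping it only strengthens the inequality; if $t_{i+1} \in T_b$ then by~(a) all of its input places lie in $P_b$, the inductive inequality supplies enough tokens for $t_{i+1}$ to be enabled in $M'_i$, and the net effect on $P_b$-places is identical in both firings, preserving the inequality.

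Applying the invariant to the step just before $t_n$, we obtain a marking $M'_{k-1}$ in $\cN_b$ with $M'_{k-1}(p) \geq M_{n-1}(p)$; firing $t_n$ then yields $M'_k$ with $M'_k(p) - M'_{k-1}(p) = F(t_n,p) - F(p,t_n) = M_n(p) - M_{n-1}(p) > 0$, so $\sigma'$ is an increasing firing sequence of $\cN_b$ w.r.t.\ $\tuple{M_0,Q}$ and is, by construction, a subsequence of $\sigma$. The hard part is establishing property~(b): only this guarantees that the $P_b$-marking can never shrink when we skip a non-sliced transition, so that subsequent $T_b$-transitions remain enabled in $\cN_b$; without it the inequality in the invariant would go the wrong way. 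Verifying~(b) requires a careful reading of how $\bslice_\cN$ accumulates places into its working set and how their presets are in turn absorbed into $T_b$.
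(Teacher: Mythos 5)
Your proof is correct and follows the same basic route as the paper's: both arguments take $\sigma'$ to be the restriction of $\sigma$ to the transitions collected by $\bslice_\cN$ and rely on the backward closure of the slice (every producer of a sliced place is sliced, and every input place of a sliced transition is sliced). The difference is one of rigor rather than strategy. The paper's proof stops at the closure observation --- it asserts that ``all incoming places and transitions that must be fired to increase the marking of $p$'' end up in $(P_b,T_b)$ and concludes directly that the subsequence exists, without ever verifying that the projected sequence is actually fireable in $\cN_b$. Your marking-domination invariant ($M'_i(q)\geq M_i(q)$ for all $q\in P_b$, proved by induction on prefixes using exactly your properties (a) and (b)) is the missing piece that justifies enabledness of each retained transition and the strict increase at the final step; you are right that this is the hard part, and it is precisely what the paper glosses over. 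One caveat: your properties (a) and (b), and the assumption $Q\subseteq P_b$, hold for the \emph{intended} semantics of $\bslice_\cN$ (the transitive backward closure described in the paper's prose and used in its own proof), but the recursive case as literally typeset never adds ${}^\bullet T$ to the working set $W$ and accumulates the pair in the order $(T,{}^\bullet T)$; you should flag that you are proving the lemma for the corrected reading of the algorithm.
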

	
	\begin{proof}
		Because $\sigma$ is an increasing firing sequence, then, by Definition~\ref{increasing-sequence-def}, we know that the marking of some place $p$ of the slicing criterion must be increased by the firing of the last transition in $\sigma$.
		Let us consider the set of transitions in $\sigma$ that must be fired to increase the marking of $p$.

		First, function $\bslice_\cN(Q, \{ \})$ includes all places of the slicing criterion in the slice, therefore, $p$ is included in the slice.
		Moreover, because it recursively takes all the incoming transitions together with their input places from the slicing criterion, then all their incoming places and transitions that must be fired to increase the marking of $p$ also belong to $\bslice_\cN(Q, \{ \})$.
		Therefore, there exists an increasing firing sequence $\sigma'$ in $\cN_b$ w.r.t. $\tuple{M_0,Q}$ such that $\sigma'$ is a subsequence of $\sigma$.
	\end{proof}
	
	Now, we prove Theorem~\ref{Theorem1}.
	
	\begin{proof} 
		Firstly, we need to prove the first condition of Definition~\ref{slice1-def}, i.e. $\cN'=(P',T',F')$ is a subnet of $\cN=(P,T,F)$ according to Definition~\ref{subnet}, such that $\not \exists p \in$ $^\bullet t, t \in T~|~t \in T' \wedge p \notin P'$. By construction, the net $\cN_b = (P_b,T_b,F_b)$ computed by function $\bslice_\cN$ is a subnet of $\cN$ because $P_b \subseteq P$, $T_b \subseteq T$, and $F_b = F|_{(P_b,T_b)}$. And, the net $\cN_f =(P_f,T_f,F_f)$ computed by function $\fslice_\cN$ is a subnet of $\cN_b$ because $P_f \subseteq P_b$, $T_f \subseteq T_b$, and $F_f = F_b|_{(P_f,T_f)}$. The final slice $\cN'$ is constructed by removing places and transitions from $\cN_f$. Therefore, $\cN'$ is a subnet of $\cN_f$ and thus a subnet of $\cN$. Moreover, function $\bslice_\cN$ always includes in the slice the input places of all included transitions. These places are kept by $\fslice_\cN$ and also by the final postprocess, thus $\not \exists p \in$ $^\bullet t, t \in T~|~t \in T' \wedge p \notin P'$.

\medskip
		Now, we prove the second condition of Definition~\ref{slice1-def}.
		If there does not exist an increasing firing sequence $\sigma$ in $\cN$ w.r.t. $\tuple{M_0,Q}$, then no slice exists. This is considered by Algorithm ~\ref{alg:Slicing-PN} when $\cN_f = (\emptyset, \emptyset, \emptyset)$.
		If an increasing firing sequence $\sigma$ exists, then the claim holds because a subsequence of $\sigma$, $\sigma_b$, exists in the backward slice of $\cN$ according to Lemma~\ref{lemma1}, and it also exists in the forward slice because function $\fslice_\cN$ always produces a slice where there exists an increasing firing subsequence $\sigma'$ of $\sigma_b$. This can be shown in an analogous way to the case of function $\bslice_\cN$ but in the opposite direction:

\medskip		
		First, because $\sigma_b$ is an increasing firing sequence, then we know that the marking of some place $p$ of the slicing criterion must be increased by the firing of the last transition in $\sigma_b$.
		Let us consider the set of transitions in $\sigma_b$ that must be fired to increase the marking of $p$.
		
		Function $\fslice_\cN$ recursively takes all the outgoing transitions together with their output places from those places in $P_b$ marked in $M_0$ as well as those transitions in $T_b$ enabled in $M_0$.
		So all transitions reachable from them are included in the slice, thus, all transitions in $\sigma_b$ that must be fired to increase the marking of $p$ are included in the forward slice.
		Hence, there exists an increasing firing sequence $\sigma'$ in $\cN_f$ w.r.t. $\tuple{M_0,Q}$ such that $\sigma'$ is a subsequence of $\sigma_b$. And, trivially, $\sigma'$ is a subsequence of $\sigma$.
Finally, the last step of the algorithm filters $\cN_f$ by removing places and transitions for which there does not exist a path to the slicing criterion. Therefore, the removed places and transitions cannot participate in the increasing firing sequence $\sigma'$.
		Hence, we can conclude that the resultant slice of Algorithm~\ref{alg:Slicing-PN} is always a valid slice according to Definition~\ref{slice1-def}.
	\end{proof}

	The following result states that the slice computed with Algorithm \ref{alg:Slicing-PN} is also a maximal slice (see Definition~\ref{maximalslice-def}).
	
	\begin{theorem}[Maximality]
		\label{theo_completeness}
		Let $\cN$ be a Petri net and $\tuple{M_0,Q}$ be a slicing criterion
		for $\cN$. The dynamic slice $\cN'$ computed in Algorithm \ref{alg:Slicing-PN}
		is a maximal slice.
	\end{theorem}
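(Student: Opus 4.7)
The first condition of Definition~\ref{maximalslice-def}, that $\cN'$ is a subnet of $\cN$, is already established in the proof of Theorem~\ref{Theorem1}, so the plan is to focus on the universal second condition. The key observation is that Lemma~\ref{lemma1} is in fact stated for an arbitrary increasing firing sequence $\sigma$, not just for existence; consequently, for every such $\sigma$ of $\cN$ w.r.t. $\tuple{M_0,Q}$, there is already a subsequence $\sigma_b$ of $\sigma$ that is an increasing firing sequence of the backward slice $\cN_b$. It therefore suffices to show that for each such $\sigma_b$ there exists an increasing firing sequence $\sigma'$ in $\cN_f = \cN'$ that is a subsequence of $\sigma_b$; transitivity of the subsequence relation then yields the desired $\sigma'$ as a subsequence of $\sigma$.

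The main step is to establish an invariant of the forward pass. Let $\sigma_b = t_1^b t_2^b \ldots t_m^b$ with $M_0 \fire{t_1^b} M_1 \fire{t_2^b} \ldots \fire{t_m^b} M_m$ in $\cN_b$. I would prove, by induction on $i$, that after finitely many recursive calls of $\fslice_\cN$ the accumulated pair $(W,R)$ satisfies (i) $\{p\in P_b \mid M_i(p)>0\} \subseteq W$ and (ii) $\{t_1^b,\ldots,t_i^b\}\subseteq R$. The base case $i=0$ is handled by the initialisation: $W$ starts as $\{p\in P_b\mid M_0(p)>0\}$, and $t_1^b$ is enabled at $M_0$ in $\cN_b$ because $\bslice_\cN$ includes every input place of every transition it collects, so $t_1^b$ belongs to the initial $V$. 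For the inductive step, observe that each input place of $t_{i+1}^b$ either carries a token already present in $M_i$ (hence lies in $W$ by the inductive hypothesis) or has received tokens from some previously fired $t_j^b$ with $j\leq i$ (hence appears in $V^\bullet$ for the call that added $t_j^b$ to $R$). In either case ${}^\bullet t_{i+1}^b \subseteq W\cup V^\bullet$ at some later call, which forces $t_{i+1}^b$ into the corresponding $V'$ and hence into $R$; the output places of $t_{i+1}^b$ then enter $W$, restoring the invariant.

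Once the invariant is established, choosing $\sigma' = \sigma_b$ yields a sequence whose transitions all lie in $T_f$ and whose intermediate markings are supported on $P_f$. Since $F_f = F_b|_{(P_f,T_f)}$, the same sequence is enabled in $\cN_f$ from $M_0$; and because its last transition strictly increases the marking of some $p\in Q$ in $\cN_b$, it does so also in $\cN_f$, so $\sigma'$ is indeed an increasing firing sequence of $\cN_f$ w.r.t. $\tuple{M_0,Q}$. The main obstacle I anticipate is the bookkeeping around $\fslice_\cN$, in particular reconciling the condition ``$M_0\fire{t}$'' in the initial $V$ (which references the original net $\cN$) with enabledness in $\cN_b$; this is resolved by the remark that $\bslice_\cN$ keeps every input place of every collected transition, so the two enabledness notions coincide on $T_b$. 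Given this, the inductive argument is routine and yields the completeness claim.
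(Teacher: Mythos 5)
Your proof is correct and follows essentially the same route as the paper's: Lemma~\ref{lemma1} (which, as you note, is already universally quantified over $\sigma$) supplies the backward-pass subsequence for \emph{every} increasing firing sequence, and a reachability argument shows that $\fslice_\cN$ retains it; the paper merely wraps this two-stage argument in a proof by contradiction. Your explicit invariant for the forward pass and the observation that enabledness in $\cN$ and $\cN_b$ coincide on $T_b$ make the argument more rigorous than the paper's sketch, but they do not change the underlying approach.
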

	
	\begin{proof}
First, by Theorem \ref{Theorem1}, we know that the dynamic slice $\cN'$ computed in Algorithm~\ref{alg:Slicing-PN}
		is a valid slice according to Definition~\ref{slice1-def} or if the slice does not exist, then the algorithm returns ``no slice exists''. Hence, we only need to prove the second bullet of Definition~\ref{maximalslice-def}.
		The proof follows easily by contradiction assuming that the slice produced by Algorithm \ref{alg:Slicing-PN} is not maximal.
		Because the slice is not maximal, then, by Definition~\ref{maximalslice-def}, there must exist an increasing firing sequence $\sigma$ of $\cN$ w.r.t. $\tuple{M_0,Q}$ for which there does not exist an increasing firing sequence $\sigma'$ in $\cN'$ w.r.t. $\tuple{M_0,Q}$ such that $\sigma'$ is a subsequence of $\sigma$.

\medskip		
		We prove that $\sigma'$ exists in $\cN'$, which is a contradiction.
		
		First, there exists a sequence $\sigma_b$ which is a subsequence of $\sigma$. This holds because
		function $\bslice_\cN(Q, \{ \})$ includes all places of the slicing criterion in the slice, and all the incoming places and transitions that must be fired to increase the marking of any place in the slicing criterion also belong to $(P_b,T_b)=\bslice_\cN(Q, \{ \})$.
		Thus, $\sigma_b$ is a subsequence of $\sigma$.
		
		Second, $\sigma'$ is a subsequence of $\sigma_b$. This holds because
		function $\fslice_\cN$ recursively takes all the outgoing transitions together with their output places from those places in $P_b$ marked in $M_0$ as well as those transitions in $T_b$ enabled in $M_0$.
		So all transitions reachable from them are included in the slice, thus, all transitions in $\sigma_b$ that must be fired to increase the marking of any place in the slicing criterion are included in the forward slice. Moreover, none of the places and transitions removed for the forward slice can contribute tokens to the slicing criterion. Therefore, their removal can make $\sigma'$ smaller, but in any case $\sigma'$ will still be an increasing firing sequence and a subsequence of $\sigma$.

\medskip		
		We can conclude that the resultant slice of Algorithm~\ref{alg:Slicing-PN} is always a maximal slice according to Definition~\ref{maximalslice-def}.
	\end{proof}

	\subsection{Petri net slicing algorithm 2: minimal contributing slice}\label{sec:alg2}

	\begin{algorithm*}\caption{Dynamic slicing of a marked Petri net: Minimal contributing slice}\label{alg:Slicing-PN2}
	\algsetup{linenosize=\normalsize}
	\begin{algorithmic}[0]
			
\REQUIRE A Petri net $\cN =(P,T,F)$ and a slicing criterion $\tuple{M_0,Q}$ for $\cN$\\
\ENSURE If it exists, the minimal contributing slice $\cN'$ of $\cN$ with respect to $\tuple{M_0,Q}$\\[2ex]
		
		First, we compute a \emph{backward slice}. This is obtained from $\bslice_\cN(Q,\{\:\})$:\\[1ex]
		\STATE \hspace{3ex}$\cN_b = (P_b,T_b,F|_{(P_b,T_b)}) = \bslice_\cN(Q,\{\:\});$\\[1ex]
		Second, we remove those branches of the backward slice that cannot contribute tokens to the slicing criterion:\\[1ex]
		\STATE \hspace{3ex}$\mathit{filteredBackwardSlice} = \filter_\cN(\mathit{\cN_b});$\\[1ex]
		Third, we compute a \emph{forward slice} of the backward slice:\\[1ex]
		\STATE \hspace{3ex}$\cN_f = (P_f,T_f,F|_{(P_f,T_f)}) = \fslice(\mathit{filteredBackwardSlice}, \tuple{M_0|_{P_b},Q});$\\[1ex]
		
		
		%
		
		We return the forward slice as the final dynamic slice:\\[1ex]
		
        \IF {$\cN_f = (\emptyset, \emptyset, \emptyset)$}
        \RETURN ``no slice exists''
		\ELSE \RETURN $\cN' = \cN_f$
		\ENDIF\\[1ex]

		The auxiliary function $\fslice$ is defined in Algorithm~\ref{alg-forward}, and the auxiliary functions $\bslice_\cN$ and $\filter_\cN$ are defined as follows:\\[1ex]
		$ \bslice_\cN(W,W_{done}) =$
		$ \left\{ \begin{array}{ll}
		(\{\:\},\{\:\})  \hspace{40ex}\mbox{if}~ W=\{\:\} \\
		(\{p\}~\cup~{}^\bullet T_{in}, T_{in}) \oplus \bslice_\cN((W \cup {}^\bullet T_{in})\:\backslash W'_{done},W'_{done})
		\hspace{2ex} \mbox{if}~W\neq \{\:\}, \\\hspace{32ex}\mbox{where}~T_{in}={}^\bullet p
		,\\\hspace{32ex} W'_{done} = W_{done}\cup\{p\}~\mbox{for some } p \in W,\\ 
		\hspace{32ex}\mbox{and}~  (A,B) \oplus (A',B') = (A \cup A', B \cup B')\\
		\end{array} \right.
		$\\[2ex]
		
		$\filter_\cN((P,T,F)) = $\\
        \WHILE {($\exists$ $p \in P$ such that $M_0(p) ==0 \wedge {}^\bullet p == \emptyset$)}
		 	\STATE $P = P \backslash \{p\}$
		    \STATE $T = T \backslash p^\bullet$					
		 \ENDWHILE
		\RETURN $(P,T,F|_{(P,T)})$
		
	\end{algorithmic}
\end{algorithm*}

To properly formalize Algorithm~\ref{alg:Slicing-PN2}, we introduce the definition of \emph{forward slicing tree}.

\begin{definition}\label{def:fst}Let $\cN$ be a Petri net and let $\tuple{M_0,Q}$ be a slicing criterion for $\cN$.
A \emph{forward slicing tree} $\cT = (V, E)$ of the marked Petri net $(\cN,M_0)$ is a reachability tree where edges represent weighted transitions between markings: $E \subseteq (V \times (T,\Nat) \times V)$ such that $(M, (t,n), M') \in E$ if $M\fire{t}M'$, with $M, M' \in V$, $t \in T$ and $n \in \Nat$ is the weight assigned to transition $t$ fired on marking $M$ as follows: $w(M,t) = l(M) + l(t)$
where $l(M)$ is the length of the path in $\cT$ from the root $M_0$ to $M$ defined as the number of edges from $M_0$ to $M$; and
$l(t)$ is the length of the minimum path in $\cN$ 
from $t$ to some $q\in Q$, defined as the minimum number of transitions from $t$ (included) to $q$.
\end{definition}

\begin{algorithm*}\caption{$\fslice(\cN, \tuple{M_0,Q})$}\label{alg:Slicing-PN3}
\algsetup{linenosize=\normalsize}
\label{alg-forward}
\begin{algorithmic}[1]
\REQUIRE A Petri net $\cN =(P,T,F)$ and a slicing criterion $\tuple{M_0,Q}$ for $\cN$\\
\ENSURE The forward slice $\cN'$ of $\cN$ with respect to $\tuple{M_0,Q}$\\
\STATE \textbf{Initialization}: $\cT = \tuple{M_0, \emptyset}$, where $\cT = \tuple{V, E}$ is a forward slicing tree.
\STATE \textbf{Begin}
\IF {$\not\exists M_f \in R(\cN, M_0)$ such that $M_0(p) < M_f(p)$ with $p \in Q$}
   \RETURN $(\emptyset, \emptyset, \emptyset)$ (i.e., no slice exists)
\ENDIF
\STATE $enabledTransitions = \{(v,t) \mid v\in V, t\in T, v\fire{t}\}$.
    \WHILE {$enabledTransitions \neq \emptyset$}
        \STATE Choose $(v,t) \in enabledTransitions$ such that\\ $\nexists (v',t') \in enabledTransitions$~where $w(v',t') < w(v,t)$.
        \STATE $enabledTransitions = enabledTransitions \backslash \{(v,t)\}$
        \STATE $V = V \cup \{v'\}$ where $v\fire{t}v'$
        \STATE $E = E \cup \{(v,(t,w(t)),v')\}$ 
        \IF {$\not\exists v'' \in V$ such that $v''$ is an ancestor of $v'$ and $v' == v''$}
            \STATE $enabledTransitions = enabledTransitions \cup \{(v',t') \mid v'\in V, t'\in T, v'\fire{t'}\}$.
        \ENDIF
        \IF {$v'(p)>v(p)$, for some $p \in Q$}
            \STATE $T' = \{t \mid t \in \mathit{RTPath}(M_0,v')\}$
            \STATE $P' = \{p\}$
            \FORALL {$t\in T'$}
                   \STATE $P' = P' \cup {}^\bullet t$
            \ENDFOR
            \RETURN $\cN' = (P',T',F|_{(P',T')})$ (i.e., the forward slice)
        \ENDIF

    \ENDWHILE

\STATE \textbf{End}
\end{algorithmic}
\end{algorithm*}

	This algorithm tries to further reduce the size of the slice produced by identifying the forward slice with the shortest increasing firing sequence. Thus, it selects the smallest set of transitions that can contribute tokens to the slicing criterion, together with their input places.
	
	The main difference between Algorithm~\ref{alg:Slicing-PN} and Algorithm~\ref{alg:Slicing-PN2} is the forward slicing phase.
	The basic idea is the following: Once the backward slice has been computed, the algorithm builds a forward slicing tree from the initial marking. A transition is added to the tree if and only if this transition has the minimum weight $w(M,t)= l(M) + l(t)$ which means that this transition is the one that can be fired after less transitions from the initial marking ($l(M)$) and at the same time it needs to fire less transitions to reach the slicing criterion ($l(t)$).

\eject	
	The algorithm performs the following phases:
	\begin{itemize}\itemsep -1pt
		\item \underline{computing the backward slice with function $\bslice_\cN$:} iteratively collecting all the incoming transitions together with their input places from the slicing criterion;
		\item \underline{discarding useless branches in the backward slice with function $\filter_\cN$:} those that do not contain any token nor any enabled transition are discarded;
		\item \underline{computing the minimal forward slice in the forward slicing tree with function $\fslice_\cN$:} This function, implemented by Algorithm~\ref{alg:Slicing-PN3}, iteratively expands the tree with the transition with the minimum weight until an increasing firing sequence is found. This increasing firing sequence is necessarily the shortest one, so that the algorithm often only builds a portion of the forward slicing tree. The final output is the net formed from the places and transitions needed to fire the sequence with the minimum weight.
		
		Note that the forward slicing tree could be infinite. Therefore, the algorithm could enter into an infinite search process if a branch is infinitely explored because no slice exist.
		To avoid this situation, the algorithm determines first whether a slice exist (lines 3-5). This can be determined in finite time and space with a coverability tree. If the slice doesn't exist then the algorithm returns $(\emptyset, \emptyset, \emptyset)$. Therefore, Algorithm~\ref{alg:Slicing-PN3} always terminates.
	\end{itemize}

\begin{example}
	The different phases of Algorithm~\ref{alg:Slicing-PN2} are depicted in Figure~\ref{fig:sliceAlg2}. The original Petri net only has one place ($p_3$, in grey) in the slicing criterion. This Petri net is sliced in the first phase with function $\bslice_\cN$, producing a backward slice. Then, function $\filter_\cN$ discards useless branches in the backward slice. Finally, function $\fslice_\cN$ (Algorithm~\ref{alg:Slicing-PN3}) builds the forward slicing tree shown in Figure~\ref{fig:forwardSlicingTree} to compute the minimal forward slice shown at the right-hand side of Figure~\ref{fig:sliceAlg2}.

\begin{figure}[h!]
\centering
\includegraphics[scale=0.47]{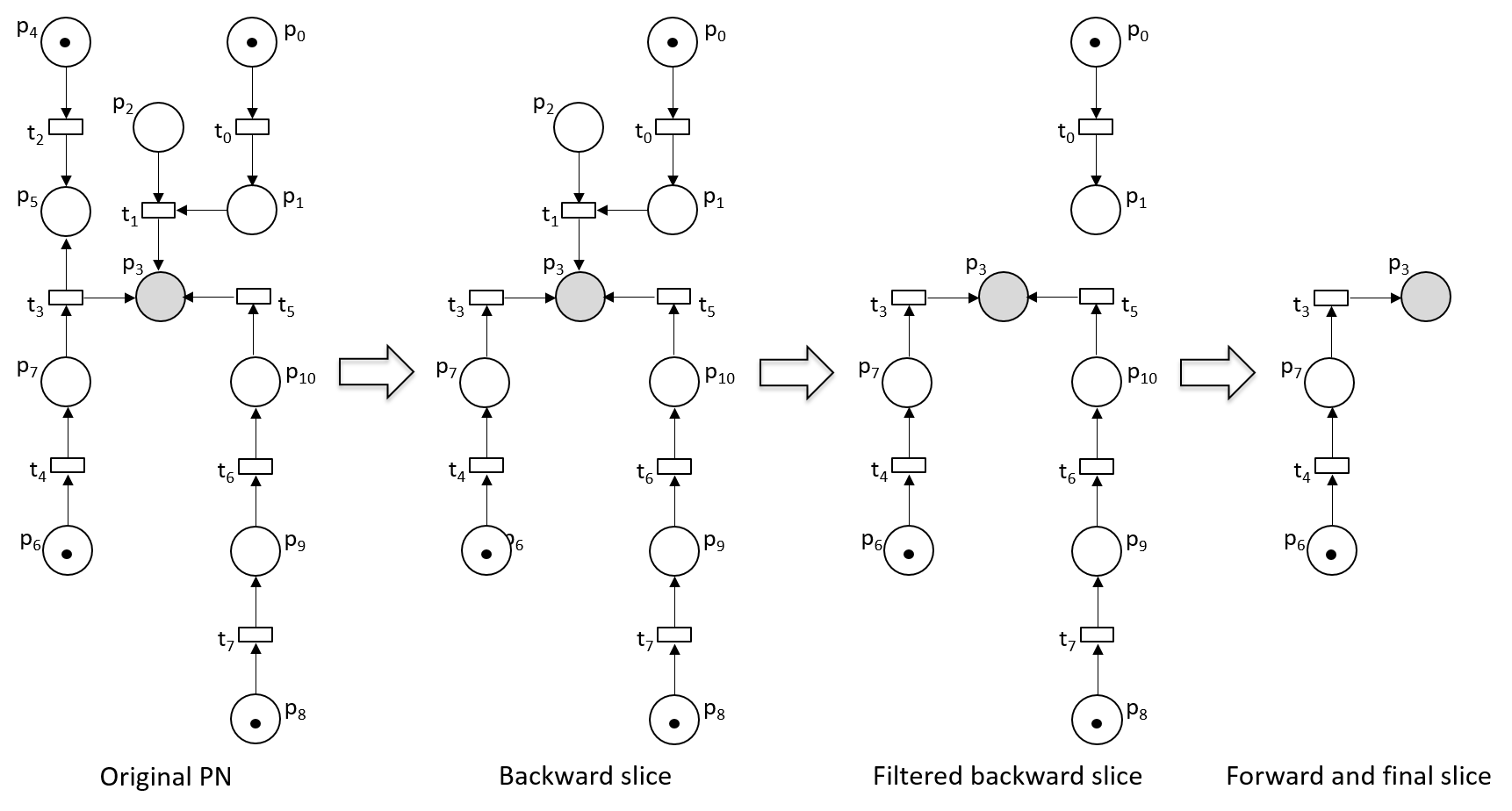}\vspace*{-1mm}
\caption{Phases of Algorithm 2.} \label{fig:sliceAlg2}
\end{figure}

\begin{figure}[h!]
\centering
\includegraphics[scale=0.47]{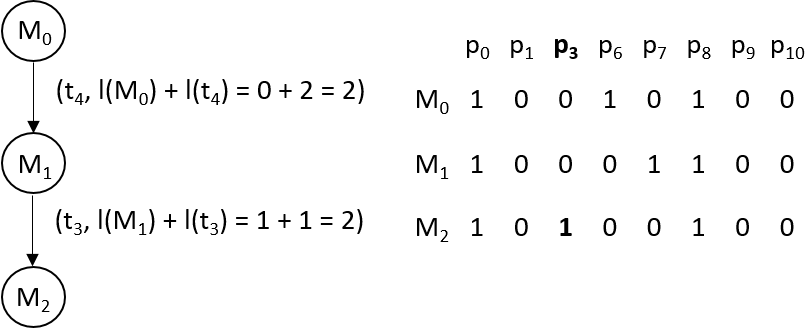}\vspace*{-1mm}
\caption{Forward slicing tree.} \label{fig:forwardSlicingTree}
\end{figure}
\end{example}
	
	The following result states that the net produced by Algorithm \ref{alg:Slicing-PN2} is a slice of the input Petri net.
	
	\begin{theorem}[Soundness]\label{Theorem3}
		Let $\cN$ be a Petri net and $\tuple{M_0,Q}$ be a slicing criterion
		for $\cN$. A dynamic slice $\cN'$ computed with Algorithm \ref{alg:Slicing-PN2}
		is a valid slice according to Definition~\ref{slice1-def} or if the slice does not exist, then the algorithm returns ``no slice exists''.
	\end{theorem}
	
	In order to prove this result, we need first to prove the following lemma:
	
	\begin{lemma}\label{lemma2}
		Let $\cN$ be a Petri net and $\tuple{M_0,Q}$ be a slicing criterion for $\cN$.
		Let $\cN_b$ be the backward dynamic slice computed by function $\bslice_\cN(Q, \{ \})$ in Algorithm \ref{alg:Slicing-PN2}. Let $\mathit{filteredBackwardSlice}$ be the filtered backward dynamic slice computed by function $\filter_\cN(\cN_b)$ in Algorithm \ref{alg:Slicing-PN2}.
		For each increasing firing sequence $\sigma$ of $\cN$ w.r.t. $\tuple{M_0,Q}$, there exists an increasing firing sequence $\sigma'$ w.r.t. $\tuple{M_0,Q}$ such that:
		\begin{itemize}
\itemsep=0.95pt
			\item $\sigma'$ exists in $\mathit{filteredBackwardSlice}$.
			\item $\sigma'$ is a subsequence of $\sigma$.
		\end{itemize}
	\end{lemma}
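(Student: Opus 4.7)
The plan is to fix an arbitrary increasing firing sequence $\sigma = t_1 \ldots t_n$ of $\cN$ w.r.t.\ $\tuple{M_0,Q}$ and, by tracing the token-dependencies of $\sigma$, identify a specific nondeterministic branch of $\bslice_\cN(Q,\{\:\},\{\:\})$ whose resulting slice survives $\filter_\cN$ and retains a subsequence of $\sigma$ that is still an increasing firing sequence.

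My first step is to extract a dependency skeleton from $\sigma$. Letting $p \in Q$ be such that $M_{n-1}(p) < M_n(p)$, I would use the ordinariness of $\cN$ to assign, to every token consumed by a firing $t_j$ at an input place $q \in {}^\bullet t_j$, a well-defined producer: either the initial marking $M_0$, or a unique preceding firing $t_i$ in $\sigma$ (using for instance a FIFO matching of productions and consumptions, which is unambiguous because every arc carries weight $1$). Starting from $t_n$ and recursing through producers, this builds a set $R \subseteq \{t_1,\ldots,t_n\}$ of transitions and a set of places $W = Q \cup {}^\bullet R$. Termination is immediate since $\sigma$ is finite. The subsequence $\sigma'$ of $\sigma$ formed by the firings in $R$ (in their original order) will then be firable from $M_0$, end with $t_n$, and hence be an increasing firing sequence w.r.t.\ $\tuple{M_0,Q}$ by construction.

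The main obstacle will be showing that some branch of $\bslice_\cN(Q,\{\:\},\{\:\})$ produces a slice $(P',T')$ with $W \subseteq P'$ and $R \subseteq T'$. I would guide the recursion as follows: whenever $\bslice_\cN$ branches on an unmarked non-source place $q$ (the third case), if $q \in W \setminus Q$ I pick the producer transition $t \in {}^\bullet q \cap R$ identified above, which is non-empty because $q$ entered $W$ precisely as an input place of some transition in $R$ whose firing in $\sigma'$ consumed a token from $q$; at any other branching the choice is immaterial. A straightforward induction on $|W_{done}|$ then shows that, under this strategy, every place of $W$ is eventually added to $W_{done}$ and every transition of $R$ ends up in $S$, yielding $(P',T') \in \mathit{backwardSlices}$ with the required inclusions.

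Finally, I would verify that $(P',T')$ survives $\filter_\cN$. Since $\sigma'$ is enabled at $M_0$, either some $q \in W$ has $M_0(q) > 0$ or some $t \in R$ has empty pre-set (a source transition firing first in $\sigma'$), so the first clause of $\filter_\cN$ holds; and the dependency construction only records places that hold or receive a token during $\sigma'$, ruling out any $q \in P'$ with both $M_0(q) = 0$ and ${}^\bullet q = \emptyset$ (under the natural assumption that $Q$ itself contains no such dead place, for otherwise no increasing firing sequence ending at it could exist). The pair $(P',T')$ therefore lies in $\mathit{filteredBackwardSlices}$, and $\sigma'$ is the required subsequence, proving the lemma.
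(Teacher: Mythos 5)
Your overall strategy is the same one the paper's (much terser) proof gestures at: trace the token dependencies of $\sigma$ backwards from the final, marking-increasing firing, argue that a suitably guided branch of $\bslice_\cN$ collects exactly those dependencies, and check that the resulting slice survives $\filter_\cN$. Your write-up makes explicit what the paper leaves implicit --- the FIFO matching of productions to consumptions, the guided choice among the nondeterministic branches, and the firability of the dependency-closed subsequence --- and your treatment of the filter is more careful than the paper's.

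However, the step ``every transition of $R$ ends up in $S$'' has a genuine gap, and it sits exactly where the paper's own proof is silent. Your guided-branching argument only covers the third case of $\bslice_\cN$ (unmarked places with a nonempty preset). But (i) in the second case the recursion \emph{stops} at any place $q$ with $M_0(q)>0$: its incoming transitions are never explored, so if your FIFO matching assigns a producer firing $t_i$ to a consumption at a marked place $q$ --- which happens as soon as the firings in $R$ consume from $q$ more often than $M_0(q)$ covers --- then $t_i\in R$ yet $t_i$ need not occur in any element of $\mathit{backwardSlices}$; and (ii) even in the third case each branch adds exactly \emph{one} $t\in{}^\bullet q$, whereas ${}^\bullet q\cap R$ may contain several producers, all of which are needed to replenish $q$ for distinct consumers in $R$. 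Concretely, take an ordinary net with $M_0(q)=1$, a source transition $t_0$ with ${}^\bullet t_0=\emptyset$ and $t_0^\bullet=\{q\}$, transitions $t_1$ and $t_2$ with ${}^\bullet t_1={}^\bullet t_2=\{q\}$, $t_1^\bullet=\{a\}$, $t_2^\bullet=\{b\}$, and $t_3$ with ${}^\bullet t_3=\{a,b\}$, $t_3^\bullet=\{p\}$, and $Q=\{p\}$. The sequence $t_1\,t_0\,t_2\,t_3$ is an increasing firing sequence, but the unique backward slice halts its recursion at the marked place $q$ and omits $t_0$; in the resulting (filter-surviving) subnet $q$ can never be replenished, $t_3$ can never fire, and no increasing firing sequence exists. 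So the inclusion $R\subseteq T'$ --- and with it the lemma as stated --- fails unless one additionally excludes dependency skeletons in which some place must be consumed more than once. Your instinct to add a side condition on $Q$ for the filter step was sound; an analogous and stronger hypothesis is needed here, and neither your argument nor the paper's supplies it.
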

	
	\begin{proof}
		First, function $\bslice_\cN(Q, \{ \})$ iteratively collects, from the places in the slicing criterion, each incoming transition together with their input places to form an independent slice ($\cN_b$). Therefore, all possible paths that reach the slicing criterion belong to $\cN_b$, and, thus, a subsequence of all possible increasing firing sequences in $\cN$ belong to $\cN_b$. This is ensured because the tokens in the slice can traverse all the paths to the slicing criterion that can be traversed in $\cN$.

\medskip		
		Second, $\filter_\cN(\cN_b)$ only removes from $\cN_b$ those branches that do not contain any token or any source transition, and there does not exist any non-marked place without input transitions;
		i.e.
		$\{(P \cup T) \in \cN_b \mid
		((\forall ~p \in P ~.~ M(p)=0
		\wedge
		(\forall ~p' \in P, p'\in predecessor(p) ~.~ M(p')=0)
		\wedge
		(\forall ~t' \in T, t'\in predecessor(p) ~.~ {}^\bullet t' \neq \emptyset))
		\wedge
		(\forall ~t \in T ~.~ {}^\bullet t \neq \emptyset
		\wedge
		(\forall ~p'' \in P, p''\in predecessor(t) ~.~ M(p'')=0)
		\wedge
		(\forall ~t'' \in T, t''\in predecessor(t) ~.~ {}^\bullet t'' \neq \emptyset))
		\}
		$.
		This ensures that all transitions in $\cN_b$ that can be enabled at some point in $\cN$ can also be enabled in $\cN_b$.
		Therefore, for each place in $Q$ there exists a subsequence of all increasing firing sequences in $\cN$ in $\mathit{filteredBackwardSlice}$.
		
		As a consequence, for all $\sigma$ in $\cN$, there exits a $\sigma'$ in $\mathit{filteredBackwardSlice}$ such that $\sigma'$ is a subsequence of $\sigma$.
	\end{proof}
	
	Now, we prove Theorem~\ref{Theorem3}.
	
	\begin{proof}
		First, we prove the first condition of Definition~\ref{slice1-def}, i.e. $\cN'=(P',T',F')$ is a subnet of $\cN=(P,T,F)$ according to Definition~\ref{subnet}, such that $\not \exists p \in$ $^\bullet t, t \in T~|~t \in T' \wedge p \notin P'$. By construction, $\mathit{filteredBackwardSlice}$ is a subnet of $\cN$ because they only contain places and transitions of the original net. Similarly, $\cN_f$ is also a subnet of $\cN$. Finally, Algorithm ~\ref{alg:Slicing-PN3} in line 19 includes in the slice the input places of all included transitions. Therefore, $\not \exists p \in$ $^\bullet t, t \in T~|~t \in T' \wedge p \notin P'$.
		
		Second, we prove the second condition of Definition~\ref{slice1-def}.
		Let $\sigma$ be an increasing firing sequence in $\cN$. By Lemma~\ref{lemma2} we know that for each increasing firing sequence $\sigma$ in $\cN$, there exists an increasing firing sequence $\sigma_b$ in $\mathit{filteredBackwardSlice}$ such that $\sigma_b$ is a subsequence of $\sigma$.
		
		In addition, there exists an increasing firing sequence in $\mathit{filteredBackwardSlice}$ because function $\filter_\cN(\cN_b)$ removes all branches that do not participate in any firing sequence.
		
		Function $\fslice_\cN$ is an incremental algorithm that iteratively collects a transition to construct the forward slicing tree. It either computes a subnet that contains an increasing firing sequence or, if it does not exist, it returns the empty slice (denoted with the tuple $(\emptyset, \emptyset, \emptyset)$). Therefore, if a non-empty slice is computed it must contain an increasing firing sequence from $M_0$. The returned slice is composed of all places and transitions used in the increasing firing sequence. 		
		Finally, there must exist an increasing firing sequence $\sigma$ in $\cN$ such that $\sigma'$ is a subsequence of $\sigma$. This holds because the slice is a subnet of $\cN$ according to Definition~\ref{slice1-def}.

If the slice does not exist ($\cN_f = (\emptyset, \emptyset, \emptyset)$), the algorithm returns ``no slice exists''.
	\end{proof}
	
	The following result states that the slice computed with Algorithm \ref{alg:Slicing-PN2} is minimal (i.e. it needs to fire a minimum number of transitions to contribute a token to the slicing criterion).
	
	\begin{theorem}[Minimality]
		\label{Theo_Minimality}
		Let $\cN$ be a Petri net and $\tuple{M_0,Q}$ be a slicing criterion
		for $\cN$. If Algorithm~\ref{alg:Slicing-PN2} computes a dynamic slice $\cN'$, then $\cN'$
		is a minimal Petri net slice.
	\end{theorem}
	
	\begin{proof}
		The proof of this result follows easily by showing that there does not exist a valid slice that contains an increasing firing sequence that is shorter than the one contained in the dynamic slice obtained by Algorithm \ref{alg:Slicing-PN2}.
		
		First, by Lemma~\ref{lemma2}, we know that
				for each increasing firing sequence $\sigma$ of $\cN$ w.r.t. $\tuple{M_0,Q}$, there exists in $\mathit{filteredBackwardSlice}$ an increasing firing subsequence $\sigma'$ w.r.t. $\tuple{M_0,Q}$.
		In the proof of that Lemma, we have also shown that $\mathit{filteredBackwardSlice}$ contains all paths to the slicing criterion, which include all transitions that can participate in any increasing firing sequence. Therefore, the minimal slice must be a subnet of $\mathit{filteredBackwardSlice}$.

		It is important to remark that the algorithm uses a weighting system that explores first those nodes that could reach the minimal slice (avoiding the depth-first search and using instead a more or less breadth-first search). In the worst case, all branches are explored up to the minimal path size, being equivalent to a breadth-first search.
		
		Now, we can prove this Theorem by contradiction assuming that function $\fslice_\cN$ can compute a slice that is not minimal.
		
		(Hypothesis) We assume that the slice $\cN'$ cannot fire the shortest increasing firing sequence $\sigma''$ that can contribute tokens to the slicing criterion. In other words, we assume that the slice $\cN'$ can fire an increasing firing sequence $\sigma' = t'_1 \ldots t'_n$ in $\cN'$ w.r.t. $\tuple{M_0,Q}$; and there exists a slice $\cN''=(P'',T'',F'')$ of $\cN$ w.r.t.\ $\tuple{M_0,Q}$ with an increasing firing sequence $\sigma'' = t''_1 \ldots t''_m$ such that $m<n$.
		If the hypothesis is true, then there must be a node $(v',t')$ in the forward slicing tree (the one selected by the algorithm) whose weight $w'(v',t')$ is the minimum and whose path from the root is $\sigma'$.
		Moreover, the node $(v'',t'')$ whose path from the root is $\sigma''$ must have a weight $w''(v'',t'')$ such that $w'(v',t')<w''(v'',t'')$. Otherwise, the algorithm would have selected the node $w''(v'',t'')$.
		
		But this is a contradiction because $w$ is defined as follows: $w(M,t) = l(M) + l(t)$ (see Definition~\ref{def:fst}). And, thus,
		$l(t'_n)=l(t''_m)=1$ and $w(v'',t'')<w(v',t')$ because $m<n$. We reach the contradiction: $w'(v',t')<w''(v'',t'') \wedge w'(v',t')>w''(v'',t'')$. Therefore, $\cN''$ cannot exist.
		
	    Finally, according to the algorithm, the Petri net $\cN'$ only contains the places and transitions needed to fire the increasing firing sequence $\sigma'$ in $\cN'$ w.r.t. $\tuple{M_0,Q}$. Then, $\cN'$ is the minimal Petri net slice of $\cN$ w.r.t. $\tuple{M_0,Q}$.
	\end{proof}
	
	\subsection{Computing the smallest slice}\label{sec:minimal-slice2}

As shown in Example \ref{example-minimal-smallest}, the minimal slice is not necessarily the smallest slice. If we want to compute the smallest slice, we need to make some modifications to Algorithm \ref{alg:Slicing-PN3}. In this section we provide a definition of \emph{smallest} slice, which is a slice with the minimum number of different transitions needed to increase the number of tokens in the slicing criterion. We measure the size of a Petri net $\cN$, denoted $|\cN|$, with its number of transitions $|T|$.
	
	\begin{definition} \label{minimalslice-def2} Let $\cN$ be a Petri
		net and let $\tuple{M_0,Q}$ be a slicing criterion for $\cN$. Given
		a Petri net $\cN'=(P',T',F')$, we say that $\cN'$ is a \emph{smallest} slice of
		$\cN$ if the following conditions hold:
		\begin{itemize}
			\item the Petri net $\cN'$ is a slice of $\cN$ w.r.t.\ $\tuple{M_0,Q}$,
			\item there does not exist a slice $\cN''=(P'',T'',F'')$ of $\cN$ w.r.t.\ $\tuple{M_0,Q}$ such that $|T''|<|T'|$.
		\end{itemize}
	\end{definition}

As it happens with the minimal slice, the smallest slice of a Petri net is not necessarily unique. There can coexist many different smallest slices of a given Petri net.

To properly formalize Algorithm~\ref{alg:Slicing-PN3}, we slightly change the definition of forward slicing tree as follows.
\begin{definition}\label{def:fst2}Let $\cN$ be a Petri net and let $\tuple{M_0,Q}$ be a slicing criterion for $\cN$.
A \emph{forward slicing tree} $\cT = (V, E)$ of the marked Petri net $(\cN,M_0)$ is a reachability tree where edges represent weighted transitions between markings: $E \subseteq (V \times (T,\Nat) \times V)$ such that $(M, (t,n), M') \in E$ if $M\fire{t}M'$, with $M, M' \in V$, $t \in T$ and $n \in \Nat$ is the weight assigned to transition $t$ fired on marking $M$ as follows: $w(M,t) = l(M) + l(t)$
where $l(M)$ is the number of edges labeled with different transitions in the path in $\cT$ from the root $M_0$ to $M$; and $l(t)$ is the length of the minimum path in $\cN$ 
from $t$ to some $q\in Q$.
\end{definition}

The only difference between Definitions \ref{def:fst2} and \ref{def:fst} is how $l(M)$ is computed. Now, $l(M)$ represents the number of different transitions in the path in $\cT$ from $M_0$ to $M$ instead of all the transitions in this path. This ensures that we explore first the paths in $\cT$ whith the mininum number of different transitions, thus the final slice is the smallest one.

\medskip
In Algorithm~\ref{alg:Slicing-PN3}, we also need to make a change: line 5 must be replaced by the following\\

5:~~~~~Choose $(v,t) \in enabledTransitions$ such that\\
\indent$ ~~~~~~~~~\!\nexists (v',t') \in enabledTransitions$~where $(w(v',t') < w(v,t)) \vee (w(v',t') = w(v,t) \wedge l(t') < l(t))$\\

In the original algorithm, the path explored in $\cT$ was always the one with the minimum weight. In the case of a draw, any path was selected. In contrast, the above change in line 5 selects one specific path in the case of a draw: it selects the enabled transition that is closer to a place in the slicing criterion.

With these changes, in Example \ref{example-minimal-smallest}, the slice produced will be the smallest slice (instead of the minimal slice).

\subsection{Properties preserved and not preserved by the slices}\label{sec:alg_props}
	
	It is important to consider the properties\footnote{The definition of properties can be found at
       \url{https://github.com/tamarit/pn_suite/blob/master/doc/ glossary.pdf}} of the original net that are preserved in a maximal or minimal slice.
	Let us consider a Petri net $\cN$ and a slicing criterion $SC = \tuple{M_0,Q}$ for $\cN$. Let $\cN'$ be the slice computed with Algorithm \ref{alg:Slicing-PN} or with Algorithm \ref{alg:Slicing-PN2} for $\cN$ w.r.t. $SC$.
	
\medskip
	We have that $\cN'$ keeps the following properties of $\cN$ (i.e., if $\cN$ has property $p$, then $\cN'$ also has property $p$):
	\begin{description}
		\item behavioural: persistent, bounded, k-bounded (but k can be reduced), k-marking (but k can be reduced), and safe.
		\item structural: free-choice, restricted free-choice, asymmetric choice, pure, homogeneous, plain, conflict-free, output non-branching, t-net, and s-net.
	\end{description}

For the Petri net slice to stop preserving the above properties, it would be needed to add new places or transitions to the original net,
or change the flow relations, or change the initial marking. But, our algorithms do not change the structure nor the marking of the original net, they just remove places or transitions.  	

\medskip
	On the other hand, we have that a maximal slice $\cN'$ not necessarily keeps the following properties of $\cN$ (i.e., if $\cN$ has property $p$, then $\cN'$ could not preserve property $p$):
	\begin{description}
		\item behavioural: backwards-persistent.
		\item structural: isolated elements can disappear, siphons can disappear, and traps can disappear.
	\end{description}

We have that a minimal slice $\cN'$ not necessarily keeps the following properties of $\cN$ (i.e., if $\cN$ has property $p$, then $\cN'$ could not preserve property $p$):
	\begin{description}
		\item behavioural: strongly live, weakly live, and reversible.
		\item structural: strongly connected, isolated elements can disappear, siphons can disappear, and non-pure only simple side conditions.
	\end{description}
	
	For each of these properties, we have computed a counter-example (a Petri net where property $p$ holds and a slicing criterion for which the slice computed does not satisfy $p$) with our implementation and the property verifier of APT \cite{apt}. They can be found at \url{https://github.com/tamarit/pn_suite/blob/master/doc/properties.pdf}. Each counter-example (a Petri net and a slicing criterion) indicates what properties are not preserved.
	
	\section{Related work: other Petri net slicing algorithms}\label{sec_stateArt}
	
	
	In this section, we review the related work concerning Petri net slicing algorithms.
	We present a list of Petri net slicing algorithms (the most relevant ones) in the current state of the art \cite{Llo08,Rak12,Yu15}.
	
	\medskip
	\noindent\textbf{CTL$^{*}_{-x}$ Slicing} \cite{Rak12}: This approach is backwards and static. It produces slices that preserve a CTL$^{*}_{-x}$ property (that is CTL$^{*}$ but without next-time operator). In this approach, the slicing criterion is defined by a list of places (for example, those places referred to in a CTL$^{*}_{-x}$ property). The slice is computed from the places in the slicing criterion by (iteratively) collecting all the transitions that change the marking of a place, i.e. the incoming and outgoing non-reading transitions together with their related input places.
	The corresponding CTL$^{*}_{-x}$ algorithm identifies all those paths that can change (decrease or increase) tokens in the slicing criterion (any place), and it also identifies the paths that could disable or enable a transition in those paths. 
	
	\medskip
	\noindent\textbf{Safety Slicing} \cite{Rak12}: This approach is backwards and static. It extracts a subnet that preserves stutter-invariant linear-time safety properties. The slicing criteria in this approach is composed of sets of places $\mathit{Q}$. The corresponding algorithm collects all non-reading transitions connected to $\mathit{Q}$ and all their input places. Then, iteratively, it collects (only) those transitions (and their input places) that could increase the tokens in the sliced net.
	%
	This algorithm identifies all the paths that can increase the tokens in the slicing criterion ensuring that in the resulting slice: (i) all the places contain the same or more tokens as in the original net; and (ii) the places in the slicing criterion keep the same number of tokens.
	
	\medskip
	\noindent\textbf{Slicing by Llorens et al.} \cite{Llo08}: This algorithm was the first approach for dynamic slicing. It is backwards, but it uses a forwards and a backwards traversal of the Petri net. The slicing criterion in this approach is a pair $\langle M_{0}, Q\rangle$, where $M_{0}$ is the initial marking and $Q$ is a set of places.
	This algorithm computes two slices: (i) a backward slice is computed collecting all incoming transitions plus their relating input places. (ii) a forward slice is computed as follows: first, it takes all places marked in $M_{0}$ and all transitions initially enabled in $M_{0}$. Then, the outgoing places and the transitions whose input places are in the slice are iteratively included in the slice. Finally, the algorithm computes the intersection of the forward and backward slices to produce the final slice.
	This algorithm identifies all the paths that, from the initial marking, can contribute tokens to any place in the slicing criterion.

	\medskip
	\noindent\textbf{Slicing by Yu et al.} \cite{Yu15}: This approach is backwards and dynamic. The underlying data structure used to produce slices is the Structural Dependency Graph (SDG). Two algorithms are used: (i) The first algorithm defines the slicing criterion as a set of places $Q$, and it builds a $\mathit{SDG}(N)$ by traversing the SDG backwards from $Q$.
	(ii) The second algorithm starts from $\mathit{SDG}(N)$. It defines the slicing criterion as $\langle M_{0}, Q\rangle$. The dynamic slice is the subnet of $N$ that can dynamically influence the slicing criterion from the initial marking $M_{0}$. It could be possible that the initial marking $M_{0}$ cannot affect the slicing criterion. In such a case, the slice is empty. This means that there does not exist any subnet that can introduce tokens to the slicing criterion.
	This algorithm identifies one single path (there could be others) that, from the initial marking, can augment the number of tokens in at least one place of the slicing criterion.
	
	%
	%
	
	
	\begin{example}
		We want to extract a slice from the Petri net in Figure~\ref{fig:initialPNandRakowCTL} with respect to the slicing criterion $\{p_6,p_9\}$.
		The slice produced by the CTL$^{*}_{-x}$ algorithm is exactly the whole original net (Figure~\ref{fig:initialPNandRakowCTL}).
		The slice produced by the safety algorithm by Rakow is depicted in Figure~\ref{fig:rakowSafety}.
		The slice produced by the algorithm by Llorens et al., and by Algorithm~\ref{alg:Slicing-PN}, is the one in Figure~\ref{fig:llorensetal}.
		The slice produced by Algorithm~\ref{alg:Slicing-PN2} is shown in Figure~\ref{fig:llorensetalimproved}.
		Finally, the slice produced by Yu et al.'s algorithm is depicted in Figure~\ref{fig:yuetal}.
	\end{example}

	\section{Implementation: \texttt{pn$\_$slicer}}\label{PN-Slicer}
	
	
	This section describes a tool named \texttt{pn$\_$slicer} that was originally proposed in \cite{Llo17}. It is a system that implements the most important slicing algorithms for Petri nets and we have extended it with the Algorithms \ref{alg:Slicing-PN} and \ref{alg:Slicing-PN2} presented in this paper. 
	Currently, \texttt{pn$\_$slicer} implements five slicing algorithms but it is ready to easily integrate any slicing algorithm so that researchers can plug in their algorithms into \texttt{pn$\_$slicer}. Therefore, this system can be seen as a workbench with facilities for slicing.
	\texttt{pn$\_$slicer} is particularly useful for the optimisation and analysis of Petri nets. For instance, it allows us to slice a Petri net with all existing algorithms and produce reports about their sizes and about what properties have been preserved in each slice.
	
	Due to efficiency reasons, \texttt{pn$\_$slicer} does not implement lines 3-5 in Algorithm \ref{alg:Slicing-PN3}. The condition used to stop is a timeout instead. In most cases, this makes our implementation to return the minimal slice without the need to build the coverability tree.
	
	The evaluation of properties (liveness, bounded, etc.) is performed by APT \cite{apt} through the use of communication interfaces that are transparent for the user. In this way, \texttt{pn$\_$slicer} internally calls APT analyses to decide what properties are kept or lost in the produced slices.
	
	\texttt{pn$\_$slicer} is open-source and free. It can be downloaded from \url{https://github.com/tamarit/pn_suite}.
	The software requirements to use this tool are: Graphviz \cite{graphviz} 
	and the Erlang/OTP framework \cite{erlang}. 
	Both systems are free.
	We also provide a Docker file \cite{docker} for those environments where those requirements cannot be fulfilled.

	The rest of this section describes the functionality of \texttt{pn$\_$slicer}.
	
	\subsection{Functionality of \texttt{pn$\_$slicer}}
	\texttt{pn$\_$slicer} can be used in two ways: (i) it computes a slice with one specified slicing algorithm, or (ii) it uses all slicing algorithms to extract different slices, then it checks what properties of the original net are preserved by the slices and, finally, it outputs those slices that preserve a specified property set.
	
	
	
	\begin{lstlisting}[basicstyle=\ttfamily\scriptsize, float=*, frame=single, caption={\texttt{pn$\_$slicer} command format.}, label=lst:pnslicer-format]
	$ pn_slicer PNML_FILE SLICING_CRITERION [PROPERTY_LIST | ALGORITHM] [-json]
	\end{lstlisting}
	
	\begin{lstlisting}[basicstyle=\ttfamily\scriptsize, float=*, frame=single, caption={\texttt{pn$\_$slicer} command usage.}, label=lst:pnslicer-usage]
	$ pn_slicer pn_example.xml "P6,P9" "conflict_free"
	Petri net named pn_example successfully read.
	Slicing criterion: [P6, P9]
	1.- Llorens et al's slicer (maximal) -> Reduction: 9.09 %
	2.- Rakow's slicer CTL -> Reduction: 0.00 %
	3.- Yu et al's slicer -> Reduction: 13.64 %
	4.- Rakow's slicer safety -> Reduction: 4.55 %
	\end{lstlisting}

	
	\noindent Listing \ref{lst:pnslicer-format} shows the \texttt{pn$\_$slicer} command format, where {\small\texttt{SLICING\_CRITERION}} is a quoted list of places separated with commas.
	{\small\texttt{PROPERTY\_LIST}} is optional: APT properties can be used.
	The interested reader can consult the list of valid APT properties in the GitHub's repository of our tool.
	{\small\texttt{ALGORITHM}} is also optional. Whenever no algorithm is specified, all algorithms are used.
	
	For example, we can slice a net \texttt{pn\_example.xml} with respect to the slicing criterion \texttt{"P6,P9"} ensuring that the {\tt \small conflict\_free} property is preserved. This produces the slices shown in Figure~\ref{fig:slicingResults}. The command used is shown in Listing \ref{lst:pnslicer-usage}.
	
	
	
	The slices produced are saved in a file named {\small\texttt{output/} \texttt{<PNML\_NAME>\_<OUTPUT\_NUMBER>.pnml}}, where {\small\texttt{<OUTPUT\_NUMBER>}} indicates the algorithm used.
	For instance, Rakow's CTL$^{*}_{-x}$ slice generated with the command shown in Listing \ref{lst:pnslicer-usage} can be found at {\small\texttt{output/example\_2.pnml}} and at {\small\texttt{output/example\_2.pdf}}. Moreover, if we activate the flag {\small\texttt{-json}}, the tool also generates a JSON output with more information.
	
	The generated Petri nets can be exported in APT, standard PNML \cite{pnml} 
	(compatible with PIPE5 \cite{pipe5}), 
	DOT and over 50 additional formats provided by Graphviz.

	\subsection{Other Petri net slicing implementations}\label{sec_implementations}
	
	
	Different implementations of the main slicing algorithms can be found in the literature \cite{Rak12,Khan15,Yu15}, but none of them is publicly available.
	For instance, in \cite{Rak11,Rak12}, Rakow presented an empirical evaluation of her two slicing algorithms. However, the implementation of the algorithms is not public and it is not described, so the empirical evaluation is neither replicable nor comparable with other algorithms.
	In \cite{Khan14,Khan15}, Khan shows $\mathit{SLAP}_{n}$, a tool for slicing Petri nets and
	Algebraic Petri nets (APN). The $\mathit{SLAP}_{n}$ tool is an Eclipse plugin. This tool draws an unfolded APN model or a Petri net model and allows us to write properties in the form of temporal formulas. From these formulas, the tool automatically extracts criterion places; and when the user
	chooses a slicing algorithm, the sliced APN or Petri net model is generated. The authors only
	implemented static slicing algorithms in the first version of this tool: APN Slicing, Abstract Slicing, Safety Slicing, and Liveness Slicing. This tool is also not publicly available but we know that the authors are working on a more stable tool, which is under development and it is not published yet.
	
	Yu et al. describe in \cite{Yu15} a slicing tool consisting of three components: a graphical
	editor, a Petri-net executor, and a slicer. The tool was implemented in C$\#$ with GDI+.
	The slicer implements the dynamic slicing processes: (1) setting
	the initial marking and the slicing criterion, (2) generating the SDG, and (3) obtaining
	the slice. Finally, to analyse the states of a modelled system, the tool allows us to generate the reachability marking graphs of the original Petri net and also of the dynamic slice.
	Visual interfaces have been developed to show these operations and results. Unfortunately,
	it is not maintained anymore, and the URL of the tool is broken.
	

	\section{Empirical evaluation}\label{Sec-Evaluation}
	
	
	In order to empirically compare the implemented Petri net slicing algorithms, we performed a number of experiments.
	To conduct the evaluation we selected Petri nets from
	the benchmark suite \textit{Model Checking Contest @ Petri Nets 2017}.
	Firstly, we randomly selected 1--5 places to define a slicing criterion. This was done 20 times for each benchmark (thus, 860 slicing criteria were produced).
	Then, from each triple (slicing algorithm, Petri net, slicing criterion) we extracted the corresponding slice (we used the five slicing algorithms integrated into \texttt{pn$\_$slicer}).
	To ensure the validation and replicability of our experiments, we made available all data at:
	\url{https://github.com/tamarit/pn_suite}.
	In folder {\tt examples/mcc\_models}, all benchmarks are classified by year.
	In folder {\tt data}, all slicing criteria are classified by year together with the reports of the statistical analysis. 
	
	We executed all benchmarks using the same configuration in the same hardware
	(Apple M1, 8-core (4 performance, 4 efficiency), with 8GB RAM)
	to evaluate their performance. The experiments were conducted ensuring that the only active process was \texttt{pn$\_$slicer} (to avoid interferences, the other processes were killed or stopped).

	Table~\ref{tab:benchsProperties} summarises the results of the statistical analysis.
	This table compares the six slicing algorithms, one in each column:
    Llorens et al.'s algorithm 1 in \cite{Llo08} ({\tt L\cite{Llo08}}),
	Llorens et al.'s maximal slicing ({\tt LM}),
	Llorens et al.'s minimal slicing ({\tt Lm}),
	Rakow's CTL$^{*}_{-x}$ slicing ({\tt RC}),
	Rakow's Safety slicing ({\tt RS}), and
	Yu et al.'s slicing ({\tt Y}).
	In each row, the best value is in bold.

\begin{table}[h!]
		\caption{Empirical evaluation of the five slicing algorithms: performance and efficiency statistics.}
		\label{tab:benchsProperties}
		\centering
		{
			\scriptsize
			\begin{tabular}{|l|l|c|c|c|c|c|c|}
				\hline
				\multicolumn{2}{|l|}{\texttt{Performance and efficiency}} & \texttt{L\cite{Llo08}} & \texttt{LM} & \texttt{Lm} & \texttt{RC} & \texttt{RS} & \texttt{Y}\\
				\hline
				\hline
				\multicolumn{2}{|l|}{$num\_places$} & 91.64\% & 91.60\% & \textbf{16.80\%} & 92.97\% & 91.94\% & 67.64\% \\
				\multicolumn{2}{|l|}{$num\_tokens$} & 98.70\% & 98.20\% & \textbf{46.40\%} & 99.57\% & 98.90\% & 98.70\% \\
				\multicolumn{2}{|l|}{$num\_arcs$} & 92.21\% & 92.19\% & \textbf{6.34\%} & 93.66\% & 88.68\% & 47.71\% \\
				\multicolumn{2}{|l|}{$num\_transitions$} & 91.85\% & 91.82\% & \textbf{7.06\%} & 93.91\% & 88.34\% & 49.40\% \\
				\hline
				\multicolumn{2}{|l|}{Size (w.r.t. the original net)} & 91.83\% & 91.80\% & \textbf{11.39\%} & 93.15\% & 88.60\% & 56.65\% \\
				\multicolumn{2}{|l|}{Time (runtime in milliseconds)} & 11.39 & 13.00 & 749.87 & 5.69 & \textbf{5.41} & 60.04 \\
				\hline
			\end{tabular}
		}
	\end{table}
	
	We calculated the size of the slice compared with the original Petri net in terms of the number of places, tokens, arcs, and transitions (see the first four rows of Table \ref{tab:benchsProperties}). This can help us to know what specific dimensions have been reduced (transitions, places, etc.). 
	We also measured the size of the slices (considering transitions and places) with respect to the size of the respective original nets. This is shown in row {\tt Size}. As an average, the different algorithms reduced between 6.85\% and 88.61\% the size of the original Petri net. The reader should note that each algorithm has a different purpose. Therefore, as previously explained, the sizes of their slices are not directly related.
	
	Finally, the mean runtime used to produce a slice is shown in row {\tt Time}.
	There, we can see that the computation of slices is a relatively efficient process ($<1$ s. in all benchmarks). Comparatively, practically all algorithms showed similar runtimes ([5,13] ms.) except for Llorens et al.'s minimal slicing and Yu et al.'s algorithms, which showed a runtime of two and one order of magnitude more, respectively. These algorithms produce significantly smaller slices than the others because they explore more paths to the slicing criterion. This justifies that its runtime is higher. We also want to remark that Yu et al.'s slicing algorithm produced an average size smaller than the others (except for Llorens et al.'s minimal slicing) but Yu et al.'s algorithm does not guarantee that the slice produced can contribute tokens to all the places in the slicing criterion (while the other algorithms do guarantee this property). It is also important to highlight that the implementation of Llorens et al.'s minimal slicing algorithm does not check whether the slice exists (i.e., it does not implement lines 3-5 of Algorithm \ref{alg:Slicing-PN3}). This would imply to build the coverability tree \cite{KM69}, which is a costly operation. In the implementation, however, the algorithm uses a timeout to stop searching for the minimal slice. This significantly reduces the time complexity of the algorithm in most of the cases.

	\section{Conclusions}\label{Sec-Conclusions}
	
	We have presented two slicing algorithms that can be useful in the debugging and specialization of Petri nets. The first algorithm (called maximal contributing slicing algorithm) extracts from a Petri net all parts (places and transitions) that can contribute tokens to the slicing criterion. This algorithm
	is an improvement of the algorithm proposed by Llorens et al. in \cite{Llo08}. The performance of the forward slicing phase in the new algorithm is always monotonically more efficient than the previous one. Moreover, the new algorithm uses a final postprocess to remove useless parts of the initial Petri net. This algorithm is useful for Petri net comprehension and debugging because the slice always contains all the causes that produced an error in the slicing criterion.
	
	The second algorithm (called minimal contributing slicing algorithm) is a new approach that collects the places and transitions needed to fire the shortest transition sequence that contributes tokens to some place in the slicing
criterion. This algorithm is useful for component extraction and reuse. A variant of this algorithm has also been defined to extract the smallest slice of a Petri net.
	
	We have provided a notion of maximality and minimality of Petri net slices, and we have formally proven that the first algorithm is maximal and the second algorithm is minimal.
	
	These algorithms together with other algorithms (for which there was not a public implementation) have been implemented and integrated into a tool called
	\texttt{pn$\_$slicer}. The implementation of all the algorithms produced a synergy that is useful and necessary, since they have different purposes, and they retain different properties in their slices.
	
	
	Implementing these algorithms produced another interesting result: we have been able to fairly evaluate and compare them. The comparison is fair because all of them have been evaluated with the same benchmarks, slicing criteria, implementation language, and hardware configuration. Previous comparisons based on the results reported in the papers were totally imprecise and unfair because each paper used different benchmarks in its evaluation.
	
	Furthermore, before our implementation, some of the discussed algorithms had not been implemented (they were theoretical results and a publicly available implementation was missing). Therefore, the efficiency, scalability, and performance of these tools were unknown.
	Our empirical evaluation properly compared these tools reporting measures about average runtimes and sizes of the produced slices.
	
	Our work offers another side result: a web system that allows researchers to freely test our tool and extract slices with different algorithms without installing the system.
	All our results, including the tools and experiments, are open-source, free, and publicly available.
	
	%

	
	\subsection*{Acknowledgments}
	
	We thank the authors of the \emph{Model Checking Contest} benchmarks for their work. We thank the reviewers of the \emph{Fundamenta Informaticae} journal for their useful and constructive feedback.

\end{document}